\DeclareMathOperator{\dist}{dist}
\def\eps{\varepsilon}
\def\demand{b}
\def\cost{c}
\def\dartof#1{\vec{#1}}
\def\dartsof#1{\dartof{#1}}
\def\dist{\delta}
\def\reach{\Delta}
\def\numlayers{L}
\def\contractset{V}
\def\parent{p}
\def\children{\parent^{-1}}
\def\ancestors{\parent^{\infty}}
\def\lcapprox{P}
\def\lcaggregate{A}
\def\lcdistribute{D}
\def\lcroute{R}
\def\lccost{C}
\def\approxratio{\alpha}
\def\opt{\textsc{OPT}}
\def\shortestpath{\pi}
\def\representative{r}
\def\incidence{I}
\def\close{N}
\newtheorem{lemma}{Lemma}[section]
\newtheorem{theorem}[lemma]{Theorem}
\begin{document}

\begin{titlepage}
  \title{A simple deterministic near-linear time approximation scheme for transshipment with arbitrary positive edge costs}

\author{
Emily Fox%
\thanks{Department of Computer Science,
 The University of Texas at Dallas; \url{emily.fox@utdallas.edu}.
 Supported in part by NSF grant CCF-1942597.}
}


\maketitle

\begin{abstract}
  We describe a simple deterministic near-linear time approximation scheme for uncapacitated minimum cost flow in undirected graphs with positive real edge weights, a problem also known as transshipment.
  Specifically, our algorithm takes as input a (connected) undirected graph \(G = (V, E)\), vertex demands \(\demand \in \R^V\) such that \(\sum_{v \in V} \demand(v) = 0\), positive edge costs \(\cost \in \R_{>0}^E\), and a parameter \(\eps > 0\).
  In \(O(\eps^{-2} m \log^{O(1)} n)\) time, it returns a flow \(f\) such that the net flow out of each vertex is equal to the vertex's demand and the cost of the flow is within a \((1 + \eps)\) factor of optimal.
  Our algorithm is combinatorial and has no running time dependency on the demands or edge costs.

  With the exception of a recent result presented at STOC 2022 for polynomially bounded edge weights, all almost- and near-linear time approximation schemes for transshipment relied on randomization to embed the problem instance into low-dimensional space.
  Our algorithm instead deterministically approximates the cost of routing decisions that would be made if the input were subject to a random \emph{tree embedding}.
  To avoid computing the \(\Omega(n^2)\) vertex-vertex distances that an approximation of this kind suggests, we also take advantage of the clustering method used in the well-known Thorup-Zwick distance oracle.
\end{abstract}

\setcounter{page}{0}
\thispagestyle{empty}
\end{titlepage}

\pagestyle{myheadings}
\markboth{Emily Fox}
{A simple deterministic near-linear time approximation scheme for transshipment}

\section{Introduction}
\label{sec:introduction}

Let \(G = (V, E)\) be an undirected graph with positive edge \EMPH{costs} \(\cost \in \R_{>0}^E\), and let \(\demand \in \R^V\) be a set of vertex \EMPH{demands} (alternatively, one may prefer the term \emph{supplies}).
While we formally define \(\cost\) and \(\demand\) as vectors with components indexed by \(E\) and \(V\), respectively, we use the familiar function application notation \(\cost(e)\) and \(\demand(v)\) for the cost of an edge \(e \in E\) and demand for a vertex \(v \in V\), respectively.
We say \(\demand\) is \EMPH{proper} if \(\sum_{v \in V} \demand(v) = 0\).

Let \(\dartsof{E}\) denote an arbitrary orientation of the edges \(E\).
We denote the oriented instance of an edge \(e \in E\) as \(\dartsof{e}\).
Let \(\incidence_G \in \R^{V \times \dartsof{E}}\) be the vertex-edge incidence matrix for \(G\) with \(\incidence_G(v, \dartsof{e})\) equal to \(1\) if \(v\) is the tail of \(\dartsof{e}\), equal to \(-1\) if \(v\) is the head of \(\dartsof{e}\), and equal to \(0\) otherwise.
We say a \EMPH{flow} \(f \in \R^{\dartsof{E}}\) \EMPH{routes} \(\demand\) if \(\incidence_G f = \demand\).
In the (uncapacitated) minimum cost flow problem, one seeks a flow of minimum cost \(\cost(f) = \sum_{e \in E} \cost(e) |f(\dartsof{e})|\) subject to \(f\) routing \(\demand\).
In other words, we seek a minimum cost way to send units of some single commodity throughout the edges of \(G\) such that each vertex \(u \in V\) with \(\demand(u) > 0\) sends out \(\demand(u)\) units of commodity into the graph and each vertex \(v \in V\) with \(\demand(v) < 0\) removes \(-\demand(v)\) units from the graph.
This special case of minimum cost flow in an undirected graph without edge capacities is also called \EMPH{transshipment}.

Transshipment generalizes various problems that have been studied in their own right such as shortest paths in undirected graphs, the discrete optimal transport problem~\cite{v-oton-08}, and other assignment problems on metric spaces.
In fact, several recent papers studying these more specific problems have relied on reductions to the more general transshipment problem~\cite{knp-pgtp-21,fl-ntasg-22,bfkl-naspt-21,asz-pausp-20,l-fpaas-20,zgyhs-udspt-22,rghzl-u1pmn-22,fl-dntas-23}.
The study of new algorithms for transshipment can provide immediate improvements or simplifications to many of these results along with providing new insights that may be beneficial to minimum cost and other flow problems in general.

\subsection{Recent results}
\label{sec:introduction-recent}

The study of flow problems such as transshipment has a long history going back several decades.
Here, we highlight some of the strongest or more recent closely related results to the current work.
We use \(n\) and \(m\) to denote the number of vertices and edges, respectively, in the input graph.

As a special case of minimum cost flow, there are several polynomial time algorithms for computing exact solutions to transshipment.
Orlin~\cite{o-fspmc-93} described a strongly polynomial transshipment algorithm that runs in \(O(n \log n(m + n \log n))\) time, and this algorithm remains the fastest algorithm known for real edge costs and vertex demands.
There has been a great deal of recent activity in the design of minimum cost flow and transshipment algorithms that assume integer costs and capacities or demands in some range~\([1, U]\), starting with an \(O(m^{3/2} \log^{O(1)} (nU))\) time algorithm by Daitch and Spielman~\cite{ds-falgf-08} and culminating in a pair of very recent \EMPH{almost-linear} \(m^{1 + o(1)} \log^2 U\) time algorithms~\cite{cklpgs-mfmfa-22,bcpklgss-datam-23}.

The existence of almost-linear time exact algorithms for minimum cost flow was alluded to a few years earlier by the demonstration of various almost- and near-linear time \EMPH{approximation schemes} for transshipment.
Let \(\eps > 0\), and let \(\opt(\demand)\) denote the minimum cost of any flow that routes \(\demand\).
Sherman~\cite{s-gpumf-17} described an \(O(\eps^{-2} m^{1 + o(1)})\) time algorithm that finds a flow \(f\) routing \(\demand\) with total cost \(\cost(f) \leq (1 + \eps)\opt(\demand)\).
Sherman's main observation was a novel method for finding solutions to linear systems \(Ax = d\) that approximately minimize an arbitrary norm \(||x||\).
His method involved composing solutions from well-known weak approximate solvers by repeatedly applying the solver to residual vectors \(d - Ax\).
The number of iterations needed for this method to converge is a function of the so-called \EMPH{generalized condition number} of \(A\).
To reduce the condition number, he proposed finding a left-cancellable matrix~\(P\) called a \EMPH{generalized preconditioner} such that \(PA\) is well-conditioned and then working with the system \(PAx = Pb\).
He then expressed transshipment as such a linear system problem and described a preconditioner \(P\) that could be efficiently applied in iterations of his composition algorithm.

Recently, the authors of \cite{l-fpaas-20} and \cite{asz-pausp-20} independently discovered \EMPH{near-linear}\linebreak \(O(\eps^{-2} m \log^{O(1)} (nU))\) time approximation schemes for transshipment.
Here, \(U\) is best understood as the \EMPH{aspect ratio} of the edge costs found by dividing the largest edge cost by the smallest.
Shortly after, Fox and Lu~\cite{fl-dntas-23} proposed a near-linear \(O(\eps^{-2} m \log^{O(1)} n)\) time approximation scheme \emph{without} the dependence on the aspect ratio.
While the above results were presented here with sequential running times in mind, there have been several recent approximation schemes proposed for various models of parallel and distributed computing, including some appearing in a subset of the work cited above~\cite{l-fpaas-20,asz-pausp-20,bfkl-naspt-21,rghzl-u1pmn-22,zgyhs-udspt-22}.

Perhaps unsurprisingly, the algorithms of~\cite{asz-pausp-20} and~\cite{fl-dntas-23} use the aforementioned framework of Sherman~\cite{s-gpumf-17} explicitly but with a more-efficiently evaluated choice for the preconditioner~\(P\).
However, \emph{all} of the approximation schemes for transshipment mentioned above rely on methods for refining loose approximate solutions into stronger ones.
Zuzic~\cite{z-sbft-23} recently provided an explanation for this commonality by uniting the approaches of these works under a single simple \EMPH{boosting} framework.
In short, all of these works implicitly build approximately optimal results to the linear programming dual for transshipment and then take advantage of the newly realized fact that \emph{any} black-box dual approximation can be boosted to a \((1 + \eps)\)-approximate solution for transshipment.

Another commonality between all of the almost- and near-linear time approximation schemes cited above, with a single exception, is that they rely \emph{heavily} on randomization.
In particular, they all compute random Bourgain~\cite{b-lefms-85} embeddings of the shortest path metric into low-dimensional space and then they compute (the cost of) random \emph{oblivious} flows that are based only on the location of their sources within that space.
The single near-linear time exception to this use of randomization is a recent paper~\cite{rghzl-u1pmn-22} describing an~\(O(\eps^{-2} m \log^{O(1)} (nU))\) time \emph{deterministic} approximation scheme.
However, unlike some of the results mentioned above~\cite{s-gpumf-17,fl-dntas-23}, its running time is still polylogarithmic in the aspect ratio of the edge costs.

\subsection{Our results}
\label{sec:introduction-results}

We present the first near-linear time approximation scheme for transshipment that is both deterministic and with a running time independent of the aspect ratio of the edge costs.
Specifically, our algorithm computes a flow \(f\) that routes \(\demand\) at cost \(\cost(f) \leq (1 + \eps)\opt(\demand)\) in \(O(\eps^{-2}m \log^{O(1)} n)\) time.
It is also (in our opinion) simpler and likely easier to implement than previous near-linear time approximation schemes for transshipment, even after excising the extra complications needed for them to efficiently function in parallel or distributed settings.
Outside what is explicitly described in this report, it depends upon just two black box results, the aforementioned boosting framework of Zuzic~\cite{z-sbft-23} implicitely used by all previous almost- and near-linear time transshipment approximation schemes and the deterministic construction~\cite{rtz-dcado-05} of a well-known distance oracle of Thorup and Zwick~\cite{tz-ado-05}.
Our algorithm is also combinatorial in that the only operations it performs with the input costs and demands are comparisons, addition, multiplication, and division.

\subsubsection*{Linear cost approximators}
\paragraph*{Prior work}

One method of instantiating the transshipment boosting framework~\cite{z-sbft-23} is to design an \EMPH{\(\approxratio\)-approximate linear cost approximator} \(\lcapprox \in \R^{k \times V}\) based only on the input graph \(G = (V, E)\) and edge costs \(\cost\).
In particular, for \emph{any} set of proper demands \(\demand\), we must have \(\opt(\demand) \leq ||\lcapprox \demand||_1 \leq \approxratio \opt(\demand)\).
Approximator \(\lcapprox\) need not be computed explicitly.
If matrix-vector multiplications with \(\lcapprox\) and \(\lcapprox^T\) can be performed in some time \(M\), then we can compute a flow \(f'\) with \(\cost(f') \leq (1 + \eps / 2)\opt(\demand)\) and \(\opt(\demand - \incidence_G f') \leq \opt(\demand) / n^2\) in \(O(\eps^{-2}\alpha^2 M \log^{O(1)} n)\) time~\cite[Corollaries 12 and 16]{z-sbft-23}.
We can then route an \(n\)-approximate flow for demands \(\demand - \incidence_G f'\) along a minimum spanning tree to get our desired \((1 + \eps)\)-approximate flow that routes \(\demand\) exactly.

Linear cost approximators are almost the same as the generalized preconditioners mentioned above, and we can look to prior work on how to design one.
In particular, the construction we use is partially motivated by the near-linear time approximation schemes of~\cite{asz-pausp-20,fl-dntas-23}.
As in most of the previous approximation schemes for transshipment, they compute a random Bourgain~\cite{b-lefms-85} embedding of the input graph's shortest path metric, mapping vertices to points in low dimensional space.
They then construct preconditioners for estimating the optimal solution value to the \emph{geometric transportation problem} over the vertices' points.
In this latter problem, the goal is to compute a weighted matching between several pairs of points of minimum total distance.

Consider a hierarchy of subsets of the vertices' points based on a sequence of progressively finer randomly shifted uniform grids.
By greedily matching points within lower levels of the hierarchy before moving to the top, one obtains a weighted matching with expected cost close to optimal.
The approximation schemes of~\cite{asz-pausp-20,fl-dntas-23} construct preconditioners to estimate this expected cost by building a collection of \emph{deterministic} (i.e., not randomly shifted) grids and explicitly computing the net expected amount of demand within each grid cell, \emph{as if they had been randomly shifted}.
Their preconditioners (modulo appropriate scaling) simply output the diameter of each grid cell times the net expected demand it contains.

\paragraph*{Novel construction inspired by tree embeddings}
Instead of using a random Bourgain embedding and then building a hierachy of subsets, our approximation scheme skips straight to considering the hierarchies formed from random embeddings into dominating \emph{tree metrics}~\cite{frt-tbaam-04} where the distance between any given pair of vertices is stretched (distorted) by a factor of at most \(O(\log n)\) in expectation.
Consider the following variation of the tree embedding of~\cite{frt-tbaam-04}.
Let \(r\) be the root of our tree.
We compute a \(2\)-approximation \(\reach\) of the diameter of \(G\) and choose a partition \(\Seq{v_1, v_2, \dots, v_n}\) of the vertices uniformly at random.
We create a sequence of disjoint clusters \(\Seq{C_1, C_2, \dots, C_n}\) where \(C_i \subseteq V\) for all \(i\).
Let \(\reach'\) be chosen uniformly at random from \([\reach / 2, \reach]\).
For each \(i\) from \(1\) to \(n\), we add to \(C_i\) all vertices within distance \(\reach'\) that have not already been claimed for another cluster.
We create a child \(c_i\) of \(r\) for each non-empty cluster \(C_i\), connected by an edge of weight \(O(\Delta)\).
Finally, we recursively build a tree rooted at \(c_i\) for each non-empty \(C_i\).
Despite the low expected stretch, some distances may be distorted by a factor of \(\Omega(n)\).
So while for any fixed \(\demand\), the expected cost of an optimal flow routing it in the tree is \(O(\log n) \cdot \opt(\demand)\), there may be some choices of \(\demand\) for which the cost blows up by that \(\Omega(n)\) factor.
Therefore, we cannot simply create a linear cost approximator based on the cost of routing different demands within the tree and expect it to give us good approximate costs relative to \(G\) for the large number of \(\demand\) vectors used in the transshipment boosting framework.

However, the \emph{expected} costs of routing demand through potential cluster centers is a fixed value that can be computed accurately.
Ignoring running time concerns, we can construct a linear cost approximator \(\lcapprox\) that accurately lists these expected costs up to constant factors.
Because the expected stretch from an actual random tree embedding using the above algorithm is \(O(\log n)\), the value \(||\lcapprox \demand||_1 \leq O(\log n) \cdot \cost(\demand)\) always.

\subsubsection*{Distance oracles and graph minors}
Unfortunately, we cannot afford to compute the \(\Omega(n^2)\) distances required to properly compute these expected costs.
Instead we take advantage of the clustering method used
in the well-known distance oracle of Thorup and Zwick~\cite{tz-ado-05}.
Let \(k \geq 1\) be an integer.
After \(O(k m n^{1/k} \log n)\) time deterministic preprocessing, their \(O(kn^{1 + 1/k})\)-space oracle can compute \((2k - 1)\)-approximate distances between any pair of vertices in \(O(k)\) time~\cite{rtz-dcado-05,tz-ado-05}.
By setting \(k := \lg n\), we get an \(O(\log n)\)-approximate distance oracle with construction time \(O(m \log^2 n)\) and size \(O(n \log n)\).
We never directly use the construction for its stated purpose as an oracle.
Instead, we use the fact that the deterministic construction stores for each \(v \in V\) a set of \(O(\log^2 n)\) vertices \(w \in V\) that suffice as the possible cluster centers for our expected cost computations.
The actual algebra proving we get a good approximator using the expected costs is, unsurprisingly, very similar to the algebra proving a random tree embedding has low expected stretch.

There is one thing left to consider in the design of our algorithm.
Even using the distance oracle, we would have to consider all possible distance scales to get correct expected costs for all possible cluster centers across all possible cluster diameters.
Doing so would lead to a polylogarithmic dependence on the aspect ratio in our running time.
To avoid that dependence, we construct a sequence of \(O(m \log n)\) minors of \(G\), each maintaining a range of possible shortest path distances up to a constant factor.
Edges of higher cost than the range of a minor are discarded, and those of significantly smaller cost are contracted, so the total size of these minors is also \(O(m \log n)\).
Our construction of the linear cost approximator \(\lcapprox\) simply considers expected costs within each minor separately.
When establishing the approximation ratio of \(\lcapprox\), we charge against the cost of individual flow paths in a decomposition of an optimal flow.
Fortunately, only \(O(\log n)\) minors charge meaningful costs to each flow path, bringing the approximation ratio of \(\lcapprox\) to a relatively small \(O(\log^3 n)\).

\subsubsection*{Oblivious routing and comparison to \cite{rghzl-u1pmn-22}}
Along with the explanation given above, the linear cost approximator \(\lcapprox\) can be interpreted as providing constant approximations to the actual cost of a certain \EMPH{oblivious} flow where a unit flow from/to each vertex \(u\) to/from an arbitrary vertex \(s\) is chosen without prior knowledge of \(\demand\) and then multiplied by \(\demand(u)\).
The description of the flow is incredibly simple;
for each adjacent pair of scales for which we consider how a random tree embedding might affect~\(u\), between each pair of potential cluster centers between those scales, the unit flow for \(u\) sends the product of the probabilities that \(u\) would join their respective clusters.
Another way to think about the flow is that for each scale, we want the demand of \(u\) to arrive at various nearby cluster centers, and using the product of proportions between scales to route flow is the most natural way to do so.
The actual construction of~\(\lcapprox\) follows this oblivious routing/reassignment interpretation, as we believe it more easily suggests that~\(\lcapprox\) is estimating the cost of an actual solution to the transshipment problem.
That said, our algorithm never actually computes an oblivous flow, because we merely need applications of its cost approximator \(\lcapprox\) to use the boosting framework.
Also, the actual choice of \emph{where} to send the flow units and the analysis for~\(\lcapprox\)'s approximation ratio is much better explained using the tree embedding motivation described above.

This oblivious routing of actual flow interpretation/approach is used much more heavily throughout the \(O(\eps^{-2}m \log^{O(1)}(nU))\) time deterministic approximation scheme of \cite{rghzl-u1pmn-22}, so it provides a means by which to compare our work to theirs.
Their paper contains the many additional details necessary for working in parallel and distributed models that are beyond the scope of the current work, so we focus just on the parts related to approximating transshipment in any model.
Similar to how our approach is inspired by random tree embeddings, theirs is inspired by random \emph{low-diameter decompositions} of the input graph at different scales and their deterministic counterparts, the \emph{sparse neighborhood covers}.
They observe that sending of a vertex \(u\)'s demand from cluster center to cluster center can lead to an oblivious flow having high cost, so they propose sending portions of the demand to nearby cluster centers proportionally to their distance and then routing flow between scales based on the product of these proportions.
Our expected cost/demand reassignment calculations also bias sending demand to nearby cluster centers.
However, the algorithm of \cite{rghzl-u1pmn-22} and its analysis is made more complicated compared to ours by, for example, them only sending flow between centers of nesting clusters.

Also similar to our work, they build various simplifications of the input graph designed to efficiently consider clusters of a certain scale.
They build \(O(\log (nU))\) simplifications to handle all possible different scales they might need to consider.
The analysis of their oblivious routing's cost must charge to the full cost of the optimal flow paths once per scale, leading to their obliviously routed flow having a cost \(O(\log^{O(1)}(nU))\) times optimal.
This approximation ratio then becomes part of their algorithm's running time as in all boosting based approximation schemes.
In contrast, our minors have total size \(O(m \log n)\) and are designed so each path in an optimal flow will receive significant charges from only \(O(\log n)\) of them, leading to a tidy \(O(\log^3 n)\) approximation ratio for our linear cost approximator.

\subsection{Organization}

We proceed as follows.
We discuss a few more needed details concerning the Thorup-Zwick distance oracle in Section~\ref{sec:oracle}.
We discuss the construction of the minor graphs (thereafter referred to as \emph{layers} of \(G\)) in Section~\ref{sec:layers}.
The construction and application of our \(O(\log^3 n)\)-approximate linear cost approximator \(\lcapprox\) is given in Section~\ref{sec:linear_cost_approx};
the reader merely interested in \emph{how} our algorithm works can stop there given the description of the boosting framework available above.
In Section~\ref{sec:approx_analysis}, we prove \(\lcapprox\) is an \(O(\log^3 n)\)-approximate linear cost approximator.
We briefly wrap things up in Section~\ref{sec:scheme} with the presentation of a theorem stating our main result.

\section{Thorup-Zwick distance oracle}
\label{sec:oracle}

Thorup and Zwick~\cite{tz-ado-05} presented a \EMPH{distance oracle} that for any integer \(k \geq 1\) has size \(O(k n^{1 + 1/k})\) and can return \((2k - 1)\)-approximate distances between any two vertices in \(O(k)\) time.
It can be constructed deterministically in \(O(k m n^{1/k} \log n)\) time~\cite{rtz-dcado-05,tz-ado-05}.
We now discuss some more details of the oracle relevant to our algorithm.

Let \(\dist(u, v)\) denote the distance between vertices \(u\) to \(v\) in \(G = (V, E)\) and let \(\dist(u, V') = \min_{v \in V'} \dist(u, v)\) for any subset \(V' \subseteq V\).
The distance oracle stores a sequence of vertex subsets \(V = S^0 \supseteq S^1 \supseteq \dots \supseteq S^k = \emptyset\) we refer to as \EMPH{samples}.
We have \(S^{k - 1} \neq \emptyset\).
We assume distances between any fixed vertex \(v\) and the other vertices of \(G\) are distinct, breaking ties as necessary.
The oracle also stores, for each vertex \(v\), a \EMPH{bundle} \(B(v) = \cup_{j = 0}^{k - 1} B^j(v)\) of vertices and their distances from \(v\) where each \EMPH{bundle piece} \(B^j(v) = \Set{w \in S^j \mid \dist(v, w) < \dist(v, S^{j + 1})}\).
In particular, \(B^j(v) \subseteq S^j \setminus S^{j + 1}\).

The total size of all bundles is \(O(k n^{1 + 1/k})\).
Some bundles may have size larger than the average size of \(O(k n^{1/k})\).
However, a careful examination of the deterministic construction of the oracle~\cite{rtz-dcado-05} shows \(|B(v)| = O(k n^{1/k} \log n)\) for all \(v \in V\).

\section{Layer graphs}
\label{sec:layers}

Let \(G = (V, E)\) be a connected undirected graph with positive edge costs \(c \in \R_{>0}^E\), and let \(n := |V|\) and \(m := |E|\).
We assume without loss of generality that \(G\) contains no loops or parallel edges and that \(n \geq 4\).
Our approximation scheme begins by computing a sequence \(\langle (G_0, \reach_0), (G_1, \reach_1), \dots, (G_{\numlayers}, \reach_{\numlayers})\rangle\) of pairs, each consisting of a \emph{minor} \(G_i\) of \(G\), also referred to as a \EMPH{layer} of \(G\), and a \EMPH{reach} \(\reach_i\) such that \(\reach_{i} \leq \reach_{i-1} / 2\) for all \(i \geq 1\).
Each iteration of the approximate optimization procedure will take time near-linear in the sum of the minors' sizes, so we must make sure that each edge of \(G\) appears in only \(O(\log n)\) different minors.
Accordingly, our construction sets each \(G_i\), including for \(i = 0\), to be the graph \(G\) after contracting all edges of cost at most \(\reach_i / n\), deleting all edges of cost strictly greater than \(2\reach_i\), and removing all vertices left isolated given the edge deletions and contractions.
We let \(V_i\) and \(E_i\) denote the vertices and edges, respectively, of each layer \(G_i\), and let \(n_i := |V_i|\) and \(m_i := |E_i|\) denote the cardinality of both sets.
Let \(\dist_i(v, w)\) and \(\dist_i(v, W)\) denote the distance from \(v\) to another vertex or set of vertices within \(G_i\).

Let \(s\) be an arbitrary vertex of \(G\), and let \(x\) and \(y\) be the two vertices farthest from \(s\).
We set \(\reach_0 := \dist(s, x) + \dist(s, y)\).
Reach \(\reach_0\) is at least, but no more than twice, the diameter of \(G\).
Given \(\reach_{i-1}\), we set \(\reach_{i}\) as follows:
If \(G_{i-1}\) is non-empty (contains at least one edge), then \(\reach_i := \reach_{i - 1} / 2\).
Otherwise, let \(e^{||} = \argmax_{e \in E \mid \cost(e) \leq \reach_{i-1} / n} c(e)\) be the costliest contracted edge of \(G_{i-1}\).
If \(e^{||}\) is well-defined, let \(\reach_i := \cost(e^{||}) \cdot n / 2\).
If \(e^{||}\) is not well-defined, then \((G_{i - 1}, \reach_{i-1})\) is the final pair in the sequence and \(\numlayers := i - 1\).

For a vertex \(v\) in some layer \(G_i\), we let \(\contractset(v)\) denote the set of vertices from the input graph \(G\) contracted to form \(v\).
Observe that the sets \(\contractset(\cdot)\) form a \emph{laminar family} in that for each pair of sets, either they are disjoint or one completely contains the other.
Accordingly, let \(i'\) be the largest index such that \(i' < i\) and there exists a vertex \(v' \in V_{i'}\) such that \(V(v) \subseteq V(v')\).
We define the \EMPH{parent} of \(v\) to be \(\parent(v) := v'\).

The \EMPH{ancestors} of \(v \in V_i\), denoted \(\ancestors(v)\) are all layer graph vertices obtained by repeatedly applying the parent operation zero or more times starting with \(v\).
In particular, \(\parent_{i'}(v)\) for some \(i' \leq i\) denotes the ancestor of \(v\) in \(V_{i'}\) if one exists.
The \EMPH{children} of \(v\) are \(\children(v) := \Set{v' \mid \parent(v') = v}\).
Vertex \(v\) is called a \EMPH{leaf} if it has no children.
Despite the evocative names, we do not actually connect the layer graphs using any kind of rooted forest data structures and instead use them mostly separately when defining our linear cost approximator.

\begin{lemma}
\label{lem:layers_size}
    The graphs~\(\langle G_0, G_1, \dots, G_{\numlayers}\rangle\) have at most \(O(m \log n)\) edges and vertices in total.
\end{lemma}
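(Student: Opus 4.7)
The plan is to bound the total count by charging each edge of the input graph $G$ to at most $O(\log n)$ layers. By the construction above, an edge $e \in E$ survives contraction in $G_i$ iff $\cost(e) > \reach_i / n$ and survives deletion iff $\cost(e) \leq 2\reach_i$, so $e \in E_i$ exactly when $\reach_i$ lies in the half-open interval $[\cost(e)/2,\, \cost(e) \cdot n)$, whose endpoints have ratio $2n$.

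The key quantitative ingredient I would establish is that consecutive reaches always shrink by at least a factor of two, that is, $\reach_i \leq \reach_{i-1}/2$ for every $i \geq 1$. This is immediate from the definition whenever $G_{i-1}$ is non-empty. For the jump case where $G_{i-1}$ has no edges, I would observe that every edge of $G$ must then have been either contracted (cost at most $\reach_{i-1}/n$) or deleted (cost greater than $2\reach_{i-1}$) at layer $i-1$; in particular the costliest contracted edge $e^{||}$ satisfies $\cost(e^{||}) \leq \reach_{i-1}/n$, so $\reach_i = \cost(e^{||}) \cdot n / 2 \leq \reach_{i-1}/2$ as claimed.

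Since reaches halve (or shrink more) from one layer to the next, any such geometric sequence contains at most $\lceil \log_2(2n) \rceil + 1 = O(\log n)$ terms within an interval whose endpoints have ratio $2n$. Hence each edge of $G$ appears in $O(\log n)$ layers, giving $\sum_{i=0}^{\numlayers} m_i = O(m \log n)$. Because the construction removes every isolated vertex, each vertex of $G_i$ has degree at least one, so $n_i \leq 2 m_i$ and $\sum_{i=0}^{\numlayers} n_i \leq 2 \sum_{i=0}^{\numlayers} m_i = O(m \log n)$ as well. The only real content of the argument, and therefore the main obstacle, is the one-line verification of the jump case; everything else is a routine scale-counting exercise.
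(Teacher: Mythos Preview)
Your proposal is correct and follows essentially the same approach as the paper: both charge each edge of $G$ to the $O(\log n)$ layers whose reach lies in the interval where the edge is neither contracted nor deleted, relying on the halving property $\reach_i \leq \reach_{i-1}/2$, and then bound the vertex count via the absence of isolated vertices. Your write-up is in fact slightly more explicit, since you verify the halving inequality in the jump case, whereas the paper simply invokes it as a stated property of the construction.
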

\begin{proof}
    Let \(e \in E\) be any edge of \(G\), and let \(G_j\) be the first graph in the sequence to contain \(e\).
    Because \(e\) is not contracted in \(G_j\), we have \(\reach_j / n < \cost(e)\).
    As stated, \(\reach_i \leq \reach_{i - 1} / 2\) for all \(i \geq 1\).
    For all \(i \geq j + \lg n + 1\), we have \(\reach_i \leq \reach_j / (2n) < \cost(e)/2\) and \(e\) is deleted from \(G_i\).
    The edge \(e\) exists in at most \(1 + \lg n\) graphs in the sequence, implying the sequence of graphs contains \(O(m \log n)\) edges in total.
    The graphs contain no isolated vertices, so the total number of vertices is also \(O(m \log n)\).
\end{proof}

\begin{lemma}
\label{lem:construct_layers}
    The sequence~\(\langle (G_0, \reach_0), (G_1, \reach_1), \dots, (G_{\numlayers}, \reach_{\numlayers})\rangle\) along with their vertices' parents and children can be computed in \(O(m \log n)\) time.
\end{lemma}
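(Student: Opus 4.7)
The plan is to sort the edges of $G$ by cost once and then use an incremental union-find while sweeping through the layers in reverse, amortizing the total work against the $O(m \log n)$ bound on total layer size from \cref{lem:layers_size} rather than against the number of layers themselves. I would first compute $\reach_0$ with a single run of Dijkstra's algorithm from $s$ in $O(m \log n)$ time, reading off the two largest distances from its shortest-path tree, and then sort all edges by cost in $O(m \log n)$ time.

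A forward sweep then determines the sequence $\reach_0, \reach_1, \ldots, \reach_L$. Two pointers into the sorted edge array track the smallest edge with cost strictly above $\reach_i / n$ and the largest edge with cost at most $2 \reach_i$; the layer $G_i$ is empty exactly when these pointers define an empty range. During halving steps both pointers only move leftward, contributing $O(m)$ movement in total. Each jump step uses a pair of binary searches to reposition the pointers in $O(\log m)$ time, and there are at most $m$ jumps overall because each uses a distinct edge as its $e^{||}$. So this phase runs in $O(m \log n)$ time.

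With the reaches in hand, I would construct the layers in reverse order $i = L, L-1, \ldots, 0$ using an incremental union-find over $V$. Before processing layer $L$, all edges of cost at most $\reach_L / n$ are merged. For each layer $i$, the regular edges (those of cost in $(\reach_i/n,\, 2\reach_i]$) are identified by advancing a pointer in the sorted edge array; for each such edge $(u,v)$ of $G$ the \emph{find} operations on $u$ and $v$ return the endpoints of its image in $E_i$, and the distinct representatives that appear form $V_i$. Moving from layer $i$ to layer $i-1$, the new contractions (cost in $(\reach_i/n,\, \reach_{i-1}/n]$) are unioned in; this is monotonic because reaches grow in reverse. Each edge of $G$ participates in at most one \emph{union}, and by \cref{lem:layers_size} the total number of regular-edge occurrences across layers is $O(m \log n)$, so all union-find work costs $O(m \log n \cdot \alpha(n))$, which is subsumed by the target bound.

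For the parent-child pointers, I would maintain a Kruskal-style laminar tree alongside the union-find: each \emph{union} creates a new internal node whose children are the two merging roots, and each $v \in V_i$ becomes a label on the node representing its contracted component. Then $\parent(v)$ is the label of largest index $i' < i$ appearing on any proper ancestor of that node, which a single post-order traversal carrying the most recent ancestor label computes for all $v$ in $O(m)$ additional time. The main obstacle is the bookkeeping for the forward sweep, in particular verifying that the total number of jumps is $O(m)$ and that the pointer movement during consecutive halvings is monotonic so the binary searches do not accumulate an extra $\log$ factor.
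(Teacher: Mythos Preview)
Your approach is sound and takes a genuinely different route from the paper's. The paper processes the layers \emph{forward} from $G_0$ to $G_L$: it keeps the still-contracted edges in a binary heap keyed by cost, and at each step it pops off the edges that become uncontracted at the new reach, rebuilding $G_i$ from $G_{i-1}$ in $O(m_{i-1}+m_i)$ time; parent pointers are recorded as new vertices appear, and a final pass fills in the children. Your reverse sweep with an incremental union-find is arguably more natural to implement, since merging components is easier than splitting them, and the Kruskal-style laminar tree is a pleasant way to recover the parent relation in a single traversal. Both arguments ultimately charge the per-layer work against $\sum_i m_i = O(m\log n)$ via \cref{lem:layers_size}.

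One small slip: you assert that the union-find cost $O(m\log n\cdot\alpha(n))$ is ``subsumed by the target bound,'' but $\alpha(n)$ is not $O(1)$, so as written you overshoot the lemma's stated $O(m\log n)$ by an inverse-Ackermann factor. This is immaterial for the paper's eventual $O(\eps^{-2}m\log^{O(1)}n)$ theorem, and if you want the exact bound you can note that the union tree is fixed in advance (it is the Kruskal forest of $G$, determined by your initial sort) and invoke the Gabow--Tarjan offline disjoint-set structure for $O(1)$ amortized cost per operation. Separately, your stated worry about the forward sweep is unnecessary: both thresholds $\reach_i/n$ and $2\reach_i$ decrease monotonically in $i$ across halving \emph{and} jump steps, so the two pointers only ever move leftward for $O(m)$ total movement, and the binary searches are not even needed.
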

\begin{proof}
    Reach \(\reach_0\) can be computed in \(O(m \log n)\) time by running Dijkstra's algorithm~\cite{d-ntpcg-59} with a binary heap to compute shortest paths.
    We store each edge of cost at most \(\reach_0 / n\) in a binary heap keyed by cost, contract these edges, and delete edges of cost more than \(2\reach_0\) in \(O(m)\) additional time.
    We now iteratively compute \((G_i, \reach_i)\) for each \(i \geq 1\).
    Suppose we've computed \((G_{i - 1}, \reach_{i-1})\).
    We compute \(\reach_i\) in \(O(1)\) time by peeking at the top of the binary heap if necessary.
    We repeatedly remove edges from the binary heap that should belong to \(G_i\) but not \(G_{i-1}\).
    We then uncontract these edges and delete costly edges from \(G_{i-1}\) to create \(G_i\) in \(O(m_{i-1} + m_i)\) time.
    We add parent pointers during this time as well.
    After all layers have been constructed, we loop over each vertex one more time to add a child pointer from their parent.
    In total, we spend \(O(m \log n)\) time removing edges from the binary heap and \(O(m \log n)\) time constructing the individual graphs (Lemma~\ref{lem:layers_size}).
\end{proof}

\begin{lemma}
\label{lem:parent_structure}
    Let \(v \in V_i\) for \(i >  0\), and let \(\parent(v) \in V_{i'}\) with \(\reach_{i'} > 2\reach_i\).
    The children of \(\parent(v)\) are exactly the members of the connected component of \(v\) in \(G_i\).
    Further, all edges incident to \(\parent(v)\) have length strictly greater than \(\reach_{i'}\), and the diameter in \(G\) of \(\contractset(\parent(v)) < 2 \reach_i\).
\end{lemma}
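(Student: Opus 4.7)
The plan is to control the structure of $V(v')$ by induction on the intermediate layer indices, establishing two structural facts: (A) the vertex set $V(v')$ remains a single connected component in the subgraph of $G$ formed by the edges contracted at layer $j$ (those of cost at most $\reach_j/n$) for every $j \in [i', i-1]$, and (B) this single component is isolated from the rest of $G$ at every $j \in [i'+1, i-1]$, and therefore absent from $V_j$. All three conclusions of the lemma will follow from (A) and (B) via routine comparisons of edge costs against $2\reach_i$ and $\reach_{i'}$.

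For the inductive step of (A), suppose $V(v')$ is a single component at layer $j$ and assume for contradiction that it splits at layer $j+1$ into $k \geq 2$ components, among which $W$ is the one containing $V(v)$. Each splitting edge $e$ lies inside $V(v')$ with $c(e) \in (\reach_{j+1}/n, \reach_j/n]$, and a short case analysis shows $c(e) \leq 2\reach_{j+1}$, so $e$ is not deleted at layer $j+1$: if $G_j$ is non-empty then $2\reach_{j+1} = \reach_j$ already upper-bounds $c(e)$; if $G_j$ is empty then the costliest edge contracted at layer $j$ has cost exactly $2\reach_{j+1}/n \leq 2\reach_{j+1}$ by construction of $\reach_{j+1}$, so this bound holds for every contracted edge, including $e$. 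Hence the splitting edges survive in $G_{j+1}$ and link $W$ to other components, so $W$ is kept in $V_{j+1}$; but then $V(v) \subseteq W$ with $W \in V_{j+1}$ and $i' < j+1 \leq i-1 < i$, contradicting the maximality of $i'$ in the definition of $\parent(v)$. Statement (B) follows immediately: if $V(v')$ itself were kept in $V_j$ for some $j \in [i'+1, i-1]$, that same maximality would be violated.

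With (A) and (B) in hand, the second conclusion is immediate from the isolation of $V(v')$ at layer $i'+1$: every edge of $G$ from $V(v')$ to its complement has cost strictly greater than $2\reach_{i'+1} = \reach_{i'}$, where the equality uses that $v' \in V_{i'}$ forces $G_{i'}$ to be non-empty and so triggers the halving rule. For the third conclusion, at layer $i-1$ the edges contracted inside $V(v')$ form, by (A), a connected spanning subgraph on $V(v')$, and each has cost at most $2\reach_i/n$: this bound equals $\reach_{i-1}/n$ when $G_{i-1}$ is non-empty and equals the cost of the costliest contracted edge of $G_{i-1}$ when it is empty. Since $|V(v')| \leq n$, any two vertices of $V(v')$ are linked in $G$ by at most $n-1$ such edges, for a total distance strictly less than $2\reach_i$.

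For the first conclusion, the same $\leq 2\reach_i/n$ bound on layer-$(i-1)$ contracted edges inside $V(v')$ shows that every layer-$i$ splitting edge inside $V(v')$ survives in $G_i$, so every sub-component of $V(v')$ at layer $i$ (in particular $V(v)$) is kept in $V_i$ and connected to the others there. The second conclusion prevents that connected component from reaching outside $V(v')$, so the component of $v$ in $G_i$ equals exactly the set of layer-$i$ sub-components of $V(v')$. A brief check using (A) and (B) confirms this equals the set of children of $v'$: at every layer strictly between $i'$ and $i$, $V(v')$ is one isolated component so no supernode $D$ with $V(D) \subseteq V(v')$ can exist there; and every vertex in a layer beyond $i$ with contraction set inside $V(v')$ already sits inside one of the layer-$i$ sub-components, which becomes its closer parent. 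The main obstacle I anticipate is the inductive step of (A) in the $G_j$-empty case, where the reach rule is not a simple halving and one must carefully verify that the splitting edges still escape deletion at layer $j+1$; once that comparison is made the remaining conclusions are essentially bookkeeping.
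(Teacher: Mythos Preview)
Your proof is correct and follows essentially the same line as the paper's: both arguments hinge on the fact that the contracted set \(V(\parent(v))\) remains a single isolated supernode at every intermediate layer \(j \in (i',i)\), from which all three conclusions follow by comparing edge costs against \(2\reach_{i'+1}=\reach_{i'}\) and \(2\reach_i/n\). Your version is considerably more explicit than the paper's terse proof---in particular, your induction establishing (A) and the case analysis for the ``\(G_j\) empty'' reach-jump are details the paper leaves to the reader---but the underlying ideas are identical.
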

\begin{proof}
    All edges leaving \(\parent(v)\) are absent in layer \(i'\) or \(i' + 1\), so they have length strictly greater than \(2 \cdot (\reach_{i'} / 2) = \reach_{i'}\).
    Children of \(\parent(v)\) appear only when the reach is small enough to uncontract some of their edges.
    No other edges are added except those being uncontracted, so the entire connected component consists of children of \(\parent(v)\).
    Finally, each edge of this component has length at most \((2 \reach_i) / n\) or some would have uncontracted using a smaller reach.
    A shortest path between any two vertices of the component has length at most \((n-1) \cdot (2 \reach_i) / n < 2 \reach_i\).
\end{proof}

Along with each layer \(G_i\), we construct a Thorup-Zwick distance oracle (Section~\ref{sec:oracle}) with parameter~\(k := \lg n\).
Let \(S_i^j\), \(B_i^j(v)\), and \(B_i(v)\) denote the \(j\)th sample, \(j\)th bunch piece of \(v\), and bunch of \(v\), respectively, within layer \(G_i\).
By Lemma~\ref{lem:layers_size}, the oracles have total size~\(O(\log n \cdot m \log n \cdot n^{1 / \lg n}) = O(m \log^2 n)\),
and \(B_i(v) = O(\log^2 n)\) for each \(i, v\).
They can be constructed in~\(O(m \log^3 n)\) time total.

\section{A linear cost approximator}
\label{sec:linear_cost_approx}

In this section, we describe how to implicitly build and efficiently evaluate matrix-vector multiplications with an \(O(\log^3 n)\)-approximate linear cost approximator~\(\lcapprox \in \R^{((\cup_i V_i) \times (\cup_i V_i)) \times V}\),
i.e., the number of rows is \(|\cup_i V_i|^2\) and the number of columns is \(n\).
Most rows are empty.
To simplify the exposition, we will actually define three separate matrices \(\lcaggregate \in \R^{(\cup_i V_i) \times V}\),
\(\lcroute \in \R^{((\cup_i V_i) \times (\cup_i V_i)) \times (\cup_i V_i)}\), and \(\lccost \in \R^{((\cup_i V_i) \times (\cup_i V_i)) \times ((\cup_i V_i) \times (\cup_i V_i))}\) that represent \textbf{A}ggregating demands,
\textbf{R}outing flow, and estimating the \textbf{C}ost of the flow, respectively.
Approximator \(\lcapprox := \lccost \lcroute \lcaggregate\) is their product.
Each of these three matrices serves a limited purpose in a three-step process of obtaining a cost approximation.
Matrices \(\lcroute\) and \(\lccost\) are sparse and can be built and stored explicitly by our algorithm.
On the other hand, matrix \(\lcaggregate\) may be dense, so each multiplication with it will be done using a simple dynamic programming procedure.

\subsection{\(\lcaggregate\): Aggregating demands}
\label{sec:linear_cost_approx-aggregating}

Recall, each vertex of each layer graph is formed from the contraction of one or more edges from \(G\).
The flow modeled by our linear cost approximator sends the net amount of demand within each \(v' \in V_i\) to other vertices of \(G_i\).
We define the \EMPH{aggregate demand} of each vertex \(v' \in V_i\) to be \(\demand(v') := \sum_{v \in \contractset(v')} \demand(v)\);
our linear cost approximator is based on the cost of flow paths moving these aggregate demands between layer graph vertices.
Matrix \(\lcaggregate\) computes these aggregate demands so we may subsequently understand the cost of the flow.
For any \(i\), for any \(v' \in V_i\), and for any input vertex \(v \in V\),
\[\lcaggregate(v', v) := [v \in \contractset(v')]\]
where \([Q]\) denotes the \(0,1\)-indicator variable for proposition \(Q\).
For any demand vector \(\demand \in \R^V\), we have \((\lcaggregate \demand)(v') = \demand(v')\).

\begin{lemma}
    \label{lem:lcapprox1}
    Let \(\demand \in \R^V\) and \(\demand' \in \R^{\cup_i V_i}\).
    Vectors \(\lcaggregate \demand\) and \((\lcaggregate)^T \demand'\) can both be computed in \(O(m \log n)\) time.
\end{lemma}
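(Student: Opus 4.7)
The plan is to exploit the laminar structure of the contractsets $\{\contractset(v') : v' \in \cup_i V_i\}$. Combined with the parent pointers from Lemma~\ref{lem:construct_layers}, this induces a forest $T$ on the $O(m\log n)$ layer vertices (Lemma~\ref{lem:layers_size}) in which $\contractset(v') \subseteq \contractset(\parent(v'))$ and in which any two contractsets are either disjoint or nested. For each input vertex $v \in V$ that appears in some contractset, the laminar property forces the set of layer vertices containing $v$ to form a chain in $T$; I write $\lambda(v)$ for its bottom element, so that a layer vertex contains $v$ in its contractset if and only if it is an ancestor of $\lambda(v)$ in~$T$.

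Given the leaf map $\lambda : V \to \cup_i V_i$, both matrix-vector products reduce to tree sweeps. To compute $\lcaggregate \demand$, I set $d(v') := \sum_{v : \lambda(v) = v'} \demand(v)$ and run a post-order DFS on $T$ computing
\[
    (\lcaggregate \demand)(v') := d(v') + \sum_{v'' \in \children(v')} (\lcaggregate \demand)(v''),
\]
which by induction on subtree depth equals $\sum_{v \in \contractset(v')} \demand(v)$, since every $v \in \contractset(v')$ either satisfies $\lambda(v) = v'$ (and contributes to $d(v')$) or lies in exactly one child's contractset. To compute $\lcaggregate^T \demand'$, I observe that $(\lcaggregate^T \demand')(v) = \sum_{v' : v \in \contractset(v')} \demand'(v')$ is simply the sum of $\demand'$ over the ancestors of $\lambda(v)$ in~$T$, so a pre-order DFS maintaining the cumulative ancestor sum $s(v') := s(\parent(v')) + \demand'(v')$ lets me set $(\lcaggregate^T \demand')(v) := s(\lambda(v))$ for each $v \in V$. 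Each sweep performs $O(1)$ work per node of $T$ and per input vertex, for $O(|T|+n) = O(m\log n)$ total time.

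The main obstacle is computing $\lambda$ within the $O(m\log n)$ budget: whether $v$ lies in a layer vertex of $V_i$ varies non-monotonically with~$i$ (a vertex can drop out and reappear as the edge-cost window shifts), so I cannot simply read $\lambda$ off Lemma~\ref{lem:construct_layers}'s forward sweep, and enumerating contractsets one layer at a time would cost $\Omega(n\numlayers)$. My plan is an offline Kruskal-style pass processing the layers from the deepest~$L$ down to~$0$ while inserting edges into a union-find on $V$ in increasing order of cost, so that each union happens exactly once. For every union-find component I keep two auxiliary pieces of data: a count of its incident edges currently in the in-layer window $(\reach_i/n, 2\reach_i]$, updated by $O(1)$ events per edge as $i$ decreases and the window shifts in both directions, and a linked list of its members that have not yet received a leaf (concatenated in $O(1)$ time on every union). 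A component corresponds to a layer vertex of $V_i$ precisely when its window count is positive; because I process layers deepest first, the first qualifying moment for a component is the deepest layer at which any of its current members lies in a layer vertex, so I walk that component's unassigned list once at that moment and set $\lambda(v) := v'$ for each remaining member. Each input vertex is walked exactly once, each edge triggers $O(1)$ window events, and the total number of qualifying-component visits across all layers is $O(\sum_i |V_i|) = O(m\log n)$, so the leaf computation also fits in $O(m\log n)$ time.
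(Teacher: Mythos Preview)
Your two tree sweeps are exactly what the paper does: iterate layer vertices from deep to shallow for $\lcaggregate\demand$ (summing over children, with singletons as the base case) and iterate shallow to deep for $\lcaggregate^T\demand'$ (accumulating ancestor sums and reading off the answer at each input vertex's leaf).

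Where you diverge is in how you obtain the leaf map~$\lambda$. You mount a separate Kruskal-style union--find pass with sliding-window edge counts, motivated by the (correct) observation that membership of~$v$ in~$V_i$ is non-monotone in~$i$. The paper sidesteps all of this by asserting the structural fact that every leaf of the parent--child forest is a singleton~$\{v\}$ and that every $v\in V$ has exactly one such leaf; consequently, for a non-singleton~$v'$ the children's contractsets partition~$\contractset(v')$ and the recursion $\demand(v')=\sum_{w\in\children(v')}\demand(w)$ is immediate. Given that fact, $\lambda$ is trivial to read off: a leaf is any layer vertex with an empty child list (already available from Lemma~\ref{lem:construct_layers}), and its sole input vertex is recorded during the uncontraction step of that lemma's forward sweep. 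The non-monotonicity you worry about is already absorbed by the parent--child forest, whose parent pointers are allowed to skip layers. Your union--find construction is essentially correct and fits the time budget, but it is unnecessary once one notices that leaves are always singletons; the paper's argument is correspondingly shorter.
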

\begin{proof}
    Consider any layer graph vertex \(v'\).
    If \(\contractset(v') = \Set{v}\) for some \(v \in V\), then \((\lcaggregate \demand)(v') = \demand(v)\).
    Otherwise, we have \(\demand(v') = \sum_{w \in \children(v')} \demand(w)\).
    We compute all entries \(\lcaggregate \demand(v)\) in \(O(m \log n)\) time by iterating through vertices in \emph{decreasing} order of layer graph index.

    Let \(v \in V\) be any input graph vertex, and let \(v'\) be the unique leaf such that \(\contractset(v') = \Set{v}\).
    We have \(((\lcaggregate)^T \demand')(v) = \sum_{w \mid v \in \contractset(w)} \demand'(w) = \sum_{w \in \ancestors(v')} \demand'(w)\).
    We compute \(\sum_{w' \in \ancestors(w)} \demand'(w')\) for all layer graph vertices \(w\) in \(O(m \log n)\) time by iterating through vertices in \emph{increasing} order of layer graph index.
    We then look up the values for the leaves in \(O(n)\) additional time.
\end{proof}

\subsection{\(\lcroute\): Routing flow}
\label{sec:linear_cost_approx-proportioning}

Matrix \(\lcroute\) is meant to model an oblivious routing of flows to satisfy the aggregate demands within each layer graph.
In Section~\ref{sec:approx_analysis-based_flow}, we show how the entries in \(\lcroute \lcaggregate \demand\) can be cleanly turned into a flow satisfying \(\demand\) itself.
The flows within each layer graph are determined by considering how much demand would be distributed to each potential cluster center in a flow based on a random tree embedding.
The hope is that opposing demands of nearby vertices see distribution to common targets, causing the demands to cancel.
In turn, the flows in subsequent layers end up routing only relatively light uncanceled portions of the original demands, keeping costs low.

We begin by computing a matrix \(\lcdistribute \in \R^{(\cup_i V_i) \times (\cup_i V_i)}\) that takes aggregate demands to their targets after \textbf{D}istribution.
Fix index~\(i\).
Consider any \(v \in V_i\).
Imagine continuously increasing a distance parameter~\(\lambda\) starting from~\(0\) and ending at~\(\reach_i\).
We wish to distribute the demand \(\demand(v)\) of \(v\) to members of \(B_i(v)\), giving precedence to those vertices in \(B_i(v)\) that are closer to \(v\).
The total fraction of demand distributed up to each moment \(\lambda\) should be equal to \(\lambda / \reach_i\).

Fix a moment~\(\lambda\).
There is a maximum index~\(j\) such that \(B_i^j(v)\) contains at least one vertex of distance at most~\(\lambda\) from \(v\).
As \(\lambda\) continues to increase, we will distribute the demand equally among exactly the vertices in \(B_i^j(v)\) at distance at most \(\lambda\) from \(v\).
This choice of equal distribution models each of those vertices being equally likely to be the center of the first ball of radius \(\lambda\) to contain \(v\).
As \(\lambda\) increases, the specific vertices within distance \(\lambda\) will change along with the index \(j\).

We now describe how to compute the total proportion of demand that should be distributed to each vertex in \(B_i(v)\).
Fix any \(j \in \Set{0, \dots k - 1 = \lg n - 1}\).
For any \(\lambda \geq 0\), let \(\bar{B}_i^j(v, \lambda) := \Set{w \in B_i^j(v) \mid \dist_i(v, w) \leq \lambda}\) denote those members of \(B_i^j(v)\) that are within distance \(\lambda\) of \(v\).
Let \(\lambda^{j+} := \min \Set{\dist(v, B_i^{j+1}(v)), \reach_i}\) (we define the distance to an empty set such as \(B_i^{k}(v)\) to be \(+\infty\)).
We sort the members of \(\bar{B}_i^j(v, \lambda^{j+})\) in increasing order of distance from \(v\) in \(G_i\) in \(O(|B_i^j(v)| \log n)\) time.
Let \(\langle w_1, w_2, \dots, w_r \rangle\) be this sorted sequence of vertices.
Finally, for each \(q \in \Set{1, \dots, r}\), we set
\[\lcdistribute(w_q, v) := \frac{\lambda^{j+} - \dist_i(v, w_r)}{r \reach_i} + \sum_{\ell = q}^{r - 1} \frac{\dist_i(v, w_{\ell+1}) - \dist_i(v, w_{\ell})}{\ell \reach_i}.\]
After sorting, the values \(\lcdistribute(w_q, v)\) for a particular \(i\), \(v\), and \(j\) can be computed in \(O(r) = O(|B_i^j(v)|)\) time as a running suffix sum.
All members of \(\lcdistribute\) not defined above are set to \(0\).

Observe for any vertex \(v \in (\cup_i V_i)\), \(\sum_{w \in (\cup_i V_i)} \lcdistribute(w, v) = 1\).
Each non-zero entry of \(\lcdistribute\) corresponds to the member of some bundle \(B(v)\), so there are at most \(O(m \log^2 n)\) of them.
Accordingly, we store \(\lcdistribute\) explicitly.
Accounting for the time needed to sort, \(\lcdistribute\) can be constructed in \(O(m \log^3 n)\) time total.

After computing \(\lcdistribute\), we are able to compute \(\lcroute\) itself.
For a given vertex \(v \in V_i\) with parent \(\parent(v) \in V_{i'}\) and \(w \in V_i\), we model routing the portion of \(v\)'s aggregate demand distributed to \(w\) to all \(w' \in V_{i'}\) proportionally to how much of \(\parent(v)\)'s demand should be distributed to the various \(w'\).
More concisely, we set
\[\lcroute((w, w'), v) := \lcdistribute(w', \parent(v)) \cdot \lcdistribute(w, v).\]
Finally, we need to route the aggregate demand of vertices in \(G_0\).
Let \(s \in V_0\) be arbitrarily chosen.
For each \(v, w \in V_0\), we set
\[\lcroute((w, s), v) := \lcdistribute(w, v).\]
Recall, each bundle \(B_i(v)\) has size at most \(O(\log^2 n)\).
Therefore, the total number of non-zero entries in \(\lcroute\) is at most \(O(m \log n) \cdot O(\log^2 n) \cdot O(\log^2 n) = O(m \log^5 n)\).
Again, we store the matrix explicitly.

\subsection{\(\lccost\): Estimating flow costs}
\label{sec:linear_cost_approx-charging}

We now compute the matrix \(\lccost\) whose job it is to estimate the costs of the flows described by \(\lcroute \lcaggregate \demand\).
As we shall see in the next section, these flows can follow relatively short paths so that the cost per unit flow sent within a layer graph \(G_i\) is \(O(\reach_i)\).
We define \(\lccost\) to be a diagonal matrix.
In order to keep it sparse, we only give it non-zero entries for pairs \((w, w')\) where row \((w, w')\) of \(\lcroute\) has at least one non-zero entry.
For each such \(w \in V_{i}\) and \(w' \in V_{i'}\), with \(\reach_{i'} = 2\reach_i\), we set
\[\lccost((w, w'), (w, w')) := 3 \reach_{i'},\]
and for each such \(w \in V_{i}\) and \(w' \in V_{i'}\), with \(i' > 2\reach_i\), we set
\[\lccost((w, w'), (w, w')) := 2 \reach_{i}.\]
(Note that \(i' = i\) for the case of \(w' = s\) as defined above.)
As before, the number of non-zero entries is \(O(m \log^5 n)\).

Recall, our \(O(\log^3 n)\)-approximate linear cost approximator \(\lcapprox := \lccost \lcroute \lcaggregate\).
We can perform matrix-vector multiplications with \(\lcapprox\) and its transpose by applying Lemma~\ref{lem:lcapprox1} when working with \(\lcaggregate\) and the standard multiplication algorithm when working with \(\lcroute\) and \(\lccost\).
\begin{lemma}
    \label{lem:lcapprox-apply}
    Let \(\demand \in \R^V\) and \(\demand' \in \R^{\cup_i V_i}\).
    Vectors \(\lcapprox \demand\) and \(\lcapprox^T \demand'\) can both be computed in \(O(m \log^5 n)\) time.
\end{lemma}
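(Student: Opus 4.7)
The plan is to exploit the factorization $\lcapprox = \lccost \lcroute \lcaggregate$ and apply the three factors one at a time, in the correct order for each of the two multiplications. For $\lcapprox \demand$ I would first compute $\demand^{(1)} := \lcaggregate \demand$, then $\demand^{(2)} := \lcroute \demand^{(1)}$, and finally $\lccost \demand^{(2)}$. For $\lcapprox^T \demand'$ I would similarly apply $\lccost^T$, then $\lcroute^T$, then $\lcaggregate^T$. The entire proof is then an accounting exercise comparing each multiplication to what we have already paid for.

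Next I would bound each of the three steps. The multiplication by $\lcaggregate$ (or $\lcaggregate^T$) is handled directly by Lemma~\ref{lem:lcapprox1} in $O(m \log n)$ time. For $\lcroute$ and $\lcroute^T$, the matrix was built and stored explicitly at the end of Section~\ref{sec:linear_cost_approx-proportioning}, and we observed there that it has $O(m \log^5 n)$ non-zero entries; the standard sparse matrix-vector product then runs in time proportional to the number of non-zero entries, giving $O(m \log^5 n)$. For $\lccost$, the matrix is diagonal, and the number of non-zero diagonal entries is bounded by the number of pairs $(w,w')$ indexing a non-empty row of $\lcroute$, which is $O(m \log^5 n)$; hence multiplying by $\lccost$ or $\lccost^T = \lccost$ takes $O(m \log^5 n)$ time as well.

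I would then conclude by noting that the intermediate vectors $\demand^{(1)}$, $\demand^{(2)}$, $\lccost^T \demand'$, and $\lcroute^T \lccost^T \demand'$ are all indexed by $\cup_i V_i$ or by the non-empty rows of $\lcroute$, which by Lemma~\ref{lem:layers_size} and the non-zero count of $\lcroute$ have total dimension at most $O(m \log^5 n)$, so they can be stored and scanned within the stated budget. Summing the three step bounds gives the overall $O(m \log^5 n)$ running time claimed by the lemma.

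There is really no obstacle here: the lemma is essentially bookkeeping once the sparsity bounds from the previous subsections and Lemma~\ref{lem:lcapprox1} are in hand. The only thing to be a little careful about is making sure one applies the three factors in the right order for each direction (so that each intermediate vector lives in the correct index set), and that the diagonal nature of $\lccost$ is used to argue that the $O(m\log^5 n)$ bound is not multiplied by another polylogarithmic factor.
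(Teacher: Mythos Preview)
Your proposal is correct and follows exactly the approach the paper takes: the sentence immediately preceding the lemma already says to apply Lemma~\ref{lem:lcapprox1} for $\lcaggregate$ and standard sparse matrix-vector multiplication for $\lcroute$ and $\lccost$, and the lemma is then stated without further proof. Your write-up simply spells out this bookkeeping, including the order of application and the sparsity counts, which is precisely what the paper leaves implicit.
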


\section{Cost approximation analysis}
\label{sec:approx_analysis}

In this section, we establish the approximation ratio \(\approxratio = O(\log^3 n)\) of the linear cost approximator~\(\lcapprox\).
Fix any proper demand vector \(\demand \in \R^V\).
We define a flow~\(f \in \R^{\dartsof{E}}\) routing~\(\demand\) where \(\cost(f) \leq ||\lcapprox \demand||_1\).
Then, we prove \(||\lcapprox \demand||_1 \leq \approxratio \opt(\demand)\), giving a concrete expression for \(\approxratio\) in the process.

\subsection{A flow based on \(\lcapprox\)}
\label{sec:approx_analysis-based_flow}

For each pair of vertices \(u,v \in V\), let \(\shortestpath(u, v)\) denote the flow sending one unit from \(u\) to \(v\) along an arbitrary chosen \EMPH{canonical shortest path} in \(G\).
For each layer graph vertex \(w\), we pick an arbitrary representative \(\representative(w) \in \contractset(w)\).

We construct the flow~\(f\) as follows.
Initially, \(f = 0^{\dartsof{E}}\).
Then, for each non-zero \(\lcroute((w, w'), v)\), we add
\(\lcroute((w,w'), v) \cdot \demand(v) \cdot \shortestpath(\representative(w), \representative(w'))\) to \(f\).
In words, we send an \(\lcroute((w, w'), v)\) proportion of the \(\demand(v)\) units of flow demanded by \(v\) along the canonical shortest path between the representatives for \(w\) and \(w'\).

\begin{lemma}
    \label{lem:based_flow_routes}
    Flow \(f \in \R^{\dartsof{E}}\) as defined above routes \(\demand\).
\end{lemma}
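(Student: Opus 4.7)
The plan is to verify $\incidence_G f = \demand$ entry by entry, showing that for each input vertex $u \in V$ the net outflow $(\incidence_G f)(u)$ equals $\demand(u)$. Because $\incidence_G \shortestpath(r(w), r(w')) = e_{r(w)} - e_{r(w')}$, expanding $f$ and decomposing each aggregate demand back into input-vertex contributions rewrites the net outflow as
\[(\incidence_G f)(u) \;=\; \sum_{y \in V} \demand(y) \sum_{v \in \ancestors(\tilde y)} h_v(u),\]
where $\tilde y$ denotes the leaf with $V(\tilde y) = \{y\}$ and $h_v(u) := \sum_{w, w'} \lcroute((w, w'), v)\bigl(\delta_{r(w), u} - \delta_{r(w'), u}\bigr)$ is the net amount of flow delivered at $u$ per unit aggregate demand at $v$.

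I would next collapse the inner sum by telescoping up the ancestor chain $\tilde y = a_\kappa, a_{\kappa-1}, \dots, a_0 \in V_0$. The product form $\lcroute((w, w'), v) = \lcdistribute(w', p(v))\,\lcdistribute(w, v)$ together with the row-sum identity $\sum_w \lcdistribute(w, v) = 1$ turns $h_{a_j}(u)$ for $j \geq 1$ into the difference $T_j(u) - T_{j-1}(u)$ of partial totals $T_j(u) := \sum_{w \in V_{i_j} :\, r(w) = u} \lcdistribute(w, a_j)$, while $h_{a_0}(u) = T_0(u) - [r(s) = u]$ absorbs the unique sink at $s$. The sum telescopes to $T_\kappa(u) - [r(s) = u]$, and everything reduces to understanding $\lcdistribute(\cdot, \tilde y)$ within the leaf's own layer $V_i$.

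The main obstacle is the key lemma that $\lcdistribute(\tilde y, \tilde y) = 1$: at the leaf layer the entire distribution out of $\tilde y$ concentrates on itself. For this, observe that $V(\tilde y) = \{y\}$ prevents $y$ from being contracted with anything at any scale $\leq \reach_i/n$. Having no children then forces $y$ to be isolated in $G_{i+1}$ (it cannot appear there as its own vertex, and cannot be contracted with others); combined with $\reach_{i+1} = \reach_i/2$ (valid since $V_i$ is non-empty), this implies every edge of $y$ in $G$ has cost strictly greater than $2\reach_{i+1} = \reach_i$. Consequently every vertex of $V_i$ other than $\tilde y$ lies at distance more than $\reach_i$ from $\tilde y$ in $G_i$, so throughout the sweep $\lambda \in [0, \reach_i]$ only $\tilde y$ is ever eligible to receive distribution mass. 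This yields $T_\kappa(u) = [r(\tilde y) = u] = [y = u]$, and properness of $\demand$ eliminates the constant $[r(s) = u]$ contribution upon summation, giving $(\incidence_G f)(u) = \demand(u)$ as required.
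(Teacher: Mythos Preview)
Your argument is correct and, at its core, follows the same telescoping/conservation idea the paper uses: flow moves from each layer to the next via the factorization \(R((w,w'),v)=\lcdistribute(w',p(v))\,\lcdistribute(w,v)\), the row sums \(\sum_w \lcdistribute(w,v)=1\) make the intermediate \(T_j\) cancel, and everything terminates at \(s\) where properness kills the residual.

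Where your write-up genuinely adds value is at the leaf boundary. The paper asserts in one line that ``each vertex \(v\in G\) appears in exactly one layer graph leaf where it sends \(b(v)\) units out along one or more (possibly trivial) shortest path flows,'' which tacitly uses both that every leaf \(\tilde y\) satisfies \(|V(\tilde y)|=1\) and that \(\lcdistribute(\tilde y,\tilde y)=1\); neither is argued. You supply the missing step: because \(\tilde y\) has no children and \(V(\tilde y)=\{y\}\), the vertex \(y\) cannot appear in \(V_{i+1}\) (it cannot be contracted with neighbours and cannot survive as a singleton child), so it is isolated there, forcing every edge of \(y\) to have cost \(>2\reach_{i+1}=\reach_i\). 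Hence every other vertex of \(V_i\) lies beyond the sweep radius \(\reach_i\), and the entire distribution concentrates on \(\tilde y\) itself, giving \(T_\kappa(u)=[y=u]\). This is exactly what the paper needs but does not spell out. One minor wording issue: your parenthetical ``it cannot appear there as its own vertex'' reads oddly next to ``isolated''---you mean \(y\) cannot appear \emph{in \(V_{i+1}\)} as its own vertex, whence (being uncontracted) it is an isolated vertex removed from \(G_{i+1}\).
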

\begin{proof}
    By construction, for any \(v \in V_i\) with \(i > 0\) and \(\parent(v) \in V_{i'}\), we have \(\sum_{w \in V_i} \lcdistribute(w, v) = 1\) and \(\sum_{w' in V_{i'}} \lcdistribute(w', \parent(v)) = 1\).
    Therefore, \(\sum_{w \in V_i} \sum_{w' \in V_{i'}} \lcroute((w, w'), v) = 1\), meaning the shortest path flows for \(v\) send a total of \(\demand(v)\) units
    from various \(w \in V_i\) to various \(w' \in V_{i'}\).
    In turn, \(\demand(\parent(v)) = \sum_{u \in \children(\parent(v))}\), so the amount taken from these various \(w' \in V_{i'}\) on behalf of \(\parent(v)\) is equal to the amount given on behalf of its children.

    The only targets \(w\) that do not have their flow carried further along by processing vertices \(v\) in lower index layers are those \(w \in V_0\).
    Each receives \(\sum_{v \in V_0} \lcdistribute(w, v) \demand(v)\) units of flow total, which are then carried to \(s\).
    Vertex \(s\) receives a net of \(\sum_{v \in V_0} \demand(v) = 0\) units from these paths.
    
    Finally, each vertex \(v \in G\) appears in exactly one layer graph leaf where it sends \(\demand(v)\) units out along one or more (possibly trivial) shortest path flows.
\end{proof}

\begin{lemma}
    \label{lem:flow_path_length}
    Let \(v,w \in V_i\) and \(w' \in V_{i'}\).
    If \(\lcroute((w, w'), v) \neq 0\) then
    \(\cost(\shortestpath(\representative(w), \representative(w'))) < 3 \reach_{i'}\).
\end{lemma}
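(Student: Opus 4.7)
The plan is to split the statement according to how \(\lcroute((w, w'), v)\) can be nonzero. Either (i) \(v, w \in V_0\) with \(w' = s\), so that \(i = i' = 0\), in which case \(\cost(\shortestpath(\representative(w), \representative(s))) \leq \text{diam}(G) \leq \reach_0 = \reach_{i'} < 3\reach_{i'}\) and we are done; or (ii) \(i > 0\) with \(v' := \parent(v) \in V_{i'}\), and both \(\lcdistribute(w, v)\) and \(\lcdistribute(w', v')\) are positive. Case (i) is immediate, so the real work is in case (ii).

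For case (ii), I plan to bound the \(G\)-distance from \(\representative(w)\) to \(\representative(w')\) by routing through the coarser layer \(G_{i'}\). By construction of \(\lcdistribute\) (Section~\ref{sec:linear_cost_approx-proportioning}), demand is only ever distributed within distance \(\reach_i\) (resp.\ \(\reach_{i'}\)) of the source, so \(\distin{i}(v, w) \leq \reach_i\) and \(\distin{i'}(v', w') \leq \reach_{i'}\). The repeated halving of reaches forces \(\reach_i \leq \reach_{i'}/2\) whenever \(i > i'\), so every edge of \(G_i\) has cost at most \(2\reach_i \leq \reach_{i'}\) and hence is never deleted from \(G_{i'}\): it is either retained as an edge or contracted. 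Projecting a shortest \(G_i\)-path from \(v\) to \(w\) down to \(G_{i'}\) thus yields a walk of no greater cost between \(v'\) and \(\parent_{i'}(w)\), and the triangle inequality gives \(\distin{i'}(\parent_{i'}(w), w') \leq \reach_i + \reach_{i'}\).

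The main technical step is to transfer this \(G_{i'}\)-distance back to a \(G\)-distance between the specific representatives, without picking up a factor depending on the length of the lifted path. I will take a shortest \(G_{i'}\)-path \(\pi\) from \(\parent_{i'}(w)\) to \(w'\) and lift it to a walk in \(G\) starting at \(\representative(w) \in \contractset(w) \subseteq \contractset(\parent_{i'}(w))\) and ending at \(\representative(w') \in \contractset(w')\); since consecutive edges of \(\pi\) need not share a \(G\)-endpoint, the walk must detour within each traversed contracted set \(\contractset(u)\). Each such \(\contractset(u)\) is spanned in \(G\) by edges of cost at most \(\reach_{i'}/n\), so its \(G\)-diameter is at most \((|\contractset(u)| - 1) \cdot \reach_{i'}/n\). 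The crucial observation is that the \(\contractset(u)\) along \(\pi\) are pairwise disjoint subsets of \(V\), so their diameters telescope to strictly less than \(n \cdot \reach_{i'}/n = \reach_{i'}\), regardless of the length of \(\pi\). This yields \(\dist(\representative(w), \representative(w')) < \distin{i'}(\parent_{i'}(w), w') + \reach_{i'} \leq \reach_i + 2\reach_{i'} \leq \frac{5}{2}\reach_{i'} < 3\reach_{i'}\).

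I expect the disjoint-sets telescoping in the lifting step to be the main obstacle: a naive per-edge accounting of the detours within contracted sets would produce a bound scaling with the length of \(\pi\), which can be \(\Theta(n)\), so exploiting the disjointness of the \(\contractset(u)\) within the fixed ambient set \(V\) is essential to fit the total detour cost under a single \(\reach_{i'}\).
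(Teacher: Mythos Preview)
Your proposal is correct and rests on the same core observation as the paper: a simple path in a layer graph lifts to a walk in \(G\) whose extra cost from traversing contracted sets totals strictly less than the layer's reach, because those sets are pairwise disjoint subsets of \(V\) and each is spanned by edges of cost at most \(\reach/n\).

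The assembly differs slightly. The paper lifts the \(G_i\)-path from \(w\) to \(v\) and the \(G_{i'}\)-path from \(\parent(v)\) to \(w'\) separately to \(G\), bounding the contracted-edge overhead by \((n-1)\reach_i/n < \reach_{i'}/2\) and \((n-1)\reach_{i'}/n < \reach_{i'}\) respectively, and concatenates the two lifts at a common point in \(\contractset(v) \subseteq \contractset(\parent(v))\); the triangle inequality then gives \(<\reach_i + \reach_{i'}/2 + \reach_{i'} + \reach_{i'} \le 3\reach_{i'}\). You instead first project the \(G_i\)-path down to \(G_{i'}\) (using that every edge of \(G_i\) has cost \(\le 2\reach_i \le \reach_{i'} \le 2\reach_{i'}\), hence survives in \(G_{i'}\) as an edge or a contraction), take a single shortest \(G_{i'}\)-path from \(\parent_{i'}(w)\) to \(w'\), and lift once to \(G\). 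This buys a marginally better constant (\(<\tfrac{5}{2}\reach_{i'}\)) at the price of one extra ingredient you left implicit: that \(\parent_{i'}(w)\) actually exists. It does, because the projected walk connects the contracted class of \(w\) to \(v'=\parent(v)\in V_{i'}\), so that class is either \(v'\) itself or incident to an edge of \(G_{i'}\), hence not an isolated vertex removed from \(V_{i'}\); laminarity of the \(\contractset(\cdot)\) sets then identifies it with \(\parent_{i'}(w)\). With that small gap filled, your argument goes through.
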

\begin{proof}

    First, suppose \(i' = i = 0\).
    Every shortest path in \(G\) has length at most \(\reach_0 < 2\reach_{i'}\).

    Now, suppose \(i' < i\).
    Value \(\lcdistribute(w, v) > 0\) only if \(\dist_{i}(v, w) \leq \reach_{i} \leq \reach_{i'} / 2\), and \(\lcdistribute(w', \parent(v)) > 0\) only if \(\dist_{i'}(\parent(v), w') \leq \reach_{i'}\).

    The projection of any path in \(G_i\) to \(G\) includes some edges of \(E_i\) in addition to at most \(n - 1\) additional contracted edges of total length at most \((n - 1) \cdot \reach_i / n < \reach_{i'} / 2\).
    Similarly, projecting paths in \(G_{i'}\) to \(G\) increases the total length by at most \((n - 1) \cdot \reach_{i'} / n < \reach_{i'}\).
    By the triangle inequality, \(\cost(\shortestpath(\representative(w), \representative(w'))) < \dist_{i}(w, v) + \reach_{i'} / 2 + \dist_{i'}(\parent(v), w') + \reach_{i'} \leq 3\reach_{i'} \).
\end{proof}

\begin{lemma}
    \label{lem:based_flow_cost}
    We have \(\cost(f) \leq ||\lcapprox \demand||_1\).
\end{lemma}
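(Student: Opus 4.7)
The plan is to rewrite the flow~\(f\) as an explicit conic combination of canonical shortest paths, apply the triangle inequality to bound \(\cost(f)\) by a sum of path lengths weighted by the absolute values of these coefficients, and then match this sum term-by-term against \(\|\lcapprox \demand\|_1\) by using that \(\lccost\) is diagonal with positive entries whose values dominate the corresponding shortest-path lengths.

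First, I would group terms in the construction of~\(f\) by the pair \((w,w')\). Using \((\lcaggregate\demand)(v) = \demand(v)\) for \(v \in \bigcup_i V_i\) (the aggregate demand of Section~\ref{sec:linear_cost_approx-aggregating}), one checks
\[
f = \sum_{(w,w')} \beta_{(w,w')} \cdot \shortestpath(\representative(w), \representative(w')),
\qquad
\beta_{(w,w')} := (\lcroute\lcaggregate\demand)((w,w')).
\]
Applying the triangle inequality to each oriented edge and summing against~\(\cost\) gives
\[
\cost(f) \;\le\; \sum_{(w,w')} |\beta_{(w,w')}| \cdot \cost\bigl(\shortestpath(\representative(w), \representative(w'))\bigr).
\]
On the other side, since \(\lccost\) is diagonal with strictly positive entries,
\[
\|\lcapprox \demand\|_1
\;=\; \sum_{(w,w')} \lccost((w,w'),(w,w')) \cdot |\beta_{(w,w')}|.
\]
So it suffices to show \(\cost(\shortestpath(\representative(w), \representative(w'))) \le \lccost((w,w'),(w,w'))\) for every \((w,w')\) with a non-zero row in \(\lcroute\).

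I would split this pointwise comparison into the three cases appearing in the definition of \(\lccost\). When \(w'=s\) and \(w \in V_0\), the path has length at most the diameter of \(G\), which is \(\le \reach_0 < 2\reach_0 = \lccost\). When \(\parent(v) \in V_{i'}\) satisfies \(\reach_{i'} = 2\reach_i\), Lemma~\ref{lem:flow_path_length} directly yields \(\cost(\shortestpath) < 3\reach_{i'} = \lccost\). The subtle case is \(\reach_{i'} > 2\reach_i\), where the generic \(3\reach_{i'}\) bound of Lemma~\ref{lem:flow_path_length} is too weak to match \(\lccost = 2\reach_i\). I would handle it as follows: by Lemma~\ref{lem:parent_structure}, every edge incident to \(\parent(v)\) in \(G_{i'}\) has length strictly greater than \(\reach_{i'}\), so the only vertex within distance \(\reach_{i'}\) of \(\parent(v)\) is \(\parent(v)\) itself; hence \(\lcdistribute(w',\parent(v))>0\) forces \(w' = \parent(v)\). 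Because \(w\) lies in the connected component of \(v\) in \(G_i\) (Lemma~\ref{lem:parent_structure} again), \(\contractset(w) \subseteq \contractset(\parent(v))\), and consequently both \(\representative(w)\) and \(\representative(w')=\representative(\parent(v))\) lie inside \(\contractset(\parent(v))\), whose diameter in \(G\) is strictly less than \(2\reach_i\). Thus \(\cost(\shortestpath(\representative(w),\representative(w'))) < 2\reach_i = \lccost\) in this case as well.

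The hardest step is precisely this third case, where I have to argue that the oblivious routing never actually exploits the nominal reach~\(\reach_{i'}\) when the parent skips layers, so that the seemingly loose \(\lccost = 2\reach_i\) entry still dominates the true path cost. Once that pointwise domination is in hand, combining the displayed inequalities yields \(\cost(f) \le \|\lcapprox\demand\|_1\), completing the proof.
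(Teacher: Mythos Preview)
Your proposal is correct and follows essentially the same approach as the paper: group the contributions to \(f\) by the pair \((w,w')\), observe that the coefficient on \(\shortestpath(\representative(w),\representative(w'))\) is exactly \((\lcroute\lcaggregate\demand)((w,w'))\), and then compare the shortest-path length against the diagonal entry \(\lccost((w,w'),(w,w'))\) case by case. Your treatment of the ``skipped layer'' case \(\reach_{i'} > 2\reach_i\) is in fact more explicit than the paper's, which simply invokes Lemma~\ref{lem:parent_structure} to assert \(\dist(w,w') \le 2\reach_i\); you spell out why \(w' = \parent(v)\) and why both representatives lie inside \(\contractset(\parent(v))\), which is exactly what that terse invocation is hiding. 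One minor inaccuracy: \(\lccost\) is diagonal with \emph{nonnegative} entries (it is zero on pairs with an all-zero row in \(\lcroute\)), not strictly positive, but your argument only uses nonnegativity so nothing breaks.
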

\begin{proof}
    Fix any \(w \in V_i\) and \(w' \in V_{i'}\).
    We send a net of \(\sum_{v \in V_i} \lcroute((w, w'), v) \demand(v)\) units of flow along the canonical shortest path from \(\representative(w)\) to \(\representative(w')\).

    Suppose \(\reach_{i'} = 2\reach_i\).
    Even ignoring the potential for cancellation with other contributions to \(f\), Lemma~\ref{lem:flow_path_length} guarantees the total cost of the flow along this path is at most \(|3 \reach_{i'} \sum_{v \in V_i} \lcroute((w, w'), v) \demand(v)| = |(\lccost \lcroute \lcaggregate \demand)((w, w'))| = |(\lcapprox \demand)((w, w'))|\).

    Now, suppose \(i' > 2\reach_{i}\).
    By Lemma~\ref{lem:parent_structure}, \(\dist(w,w') \leq 2 \reach_i\), so the total cost of the flow along the path is at most \(|2 \reach_i \sum_{v \in V_i} \lcroute((w,w'), v) \demand(v)| = |\lcapprox \demand)((w, w'))|\).
\end{proof}

\subsection{Approximation ratio}
\label{sec:approx_analysis-ratio}

Our approximation ratio upper bound depends on the following lemma, loosely mirroring the fact that a random tree embedding of a graph distorts distances by only a small factor in expection.
The lemma essentially states that the closer two vertices sharing a layer graph are to one-another, the less they disagree on where their aggregate demand should be reassigned.
Let \(H_n = 1/1 + 1/2 + \dots + 1/n\) denote the \(n\)th harmonic number.

\begin{lemma}
\label{lem:distortion}
Let \(u,v \in V_i\) for some \(i\).
We have
\[\sum_{w \in V_i} |\lcdistribute(w, u) - \lcdistribute(w, v)| < \frac{8\dist_i(u, v) H_n \lg n}{\reach_i}.\]
\end{lemma}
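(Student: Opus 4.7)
My plan is to reinterpret each entry of \(\lcdistribute\) as an integral over the sweep parameter \(\lambda\) used in Section~\ref{sec:linear_cost_approx-proportioning}, and then bound the pointwise total variation distance between the uniform distributions on \(u\)'s and \(v\)'s active sets.  For any \(x \in V_i\), write \(j_x(\lambda) := \max\{j : \dist_i(x, S_i^j) \leq \lambda\}\) and \(A_x(\lambda) := \{w \in S_i^{j_x(\lambda)} : \dist_i(x,w) \leq \lambda\}\); unwinding the piecewise sum defining \(\lcdistribute\) gives
\[
\lcdistribute(w,x) \;=\; \frac{1}{\reach_i}\int_0^{\reach_i} \frac{[w \in A_x(\lambda)]}{|A_x(\lambda)|}\, d\lambda.
\]
Pushing the absolute value inside the integral and swapping the \(w\)-sum with the integral by Fubini reduces the lemma to showing \(\int_0^{\reach_i} T(\lambda)\, d\lambda \leq 4 \dist_i(u,v)\, H_n\, \lg n\), where \(T(\lambda)\) denotes the total variation distance between the uniform distributions on \(A_u(\lambda)\) and \(A_v(\lambda)\).

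I handle two regimes separately.  If \(j_u(\lambda) \neq j_v(\lambda)\), say \(j_v(\lambda) = j > j_u(\lambda)\), then \(A_u(\lambda) \cap A_v(\lambda) = \emptyset\) because a common element \(w\) would give \(\dist_i(u,S_i^j) \leq \dist_i(u,w) \leq \lambda\) and force \(j_u(\lambda) \geq j\); hence \(T(\lambda) = 1\).  But the \(\lambda\)'s in this sub-regime lie in \([\dist_i(v,S_i^j), \dist_i(u,S_i^j))\), an interval of length at most \(\dist_i(u,v)\) by the triangle inequality.  Summing over the at most \(\lg n\) values of \(j\) and both orientations contributes at most \(2\dist_i(u,v)\, \lg n\) to the integral.

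The harder regime is \(j_u(\lambda) = j_v(\lambda) = j\), where both active sets live inside \(S_i^j\) and \(w \in A_u(\lambda) \setminus A_v(\lambda)\) exactly when \(\dist_i(u,w) \leq \lambda < \dist_i(v,w)\).  I bound \(T(\lambda) \leq |A_u \setminus A_v|/|A_u| + |A_v \setminus A_u|/|A_v|\) and swap the \(w\)-sum with the \(\lambda\)-integral; for each fixed \(w \in S_i^j\), the indicator's support in \(\lambda\) has length at most \(|\dist_i(u,w) - \dist_i(v,w)| \leq \dist_i(u,v)\).  The crucial observation, mirroring the standard FRT stretch analysis, is that whenever \(w \in A_u(\lambda)\), every vertex of \(S_i^j\) closer to \(u\) than \(w\) is already in \(A_u(\lambda)\), so \(|A_u(\lambda)| \geq \operatorname{rk}_u(w)\), where \(\operatorname{rk}_u(w)\) is the rank of \(w\) among \(S_i^j\) sorted by distance from \(u\).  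Each per-\(w\) integral is therefore at most \(\dist_i(u,v)/\operatorname{rk}_u(w)\), and summing over \(w \in S_i^j\) telescopes into a harmonic sum \(\dist_i(u,v)\, H_{|S_i^j|} \leq \dist_i(u,v)\, H_n\).  Doing the symmetric computation on the \(v\)-side and summing over the at most \(\lg n\) levels contributes at most \(2\dist_i(u,v)\, H_n\, \lg n\).

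Combining the two regimes gives \(\int_0^{\reach_i} T(\lambda)\, d\lambda \leq 2\dist_i(u,v)\, \lg n \cdot (H_n + 1) < 4\dist_i(u,v)\, H_n\, \lg n\), using \(H_n > 1\) (which holds for \(n \geq 4\)), and multiplying by \(2/\reach_i\) yields the claimed strict bound.  The main obstacle is setting up the rank-based denominator correctly: one must confirm that in the \(j_u = j_v = j\) regime the Thorup--Zwick sample \(S_i^j\) plays exactly the role of the random permutation in the classical FRT argument, so that each vertex \(w\) contributes once to a single harmonic sum per level instead of accumulating charge across all scales.
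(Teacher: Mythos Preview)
Your proof is correct and follows essentially the same argument as the paper's own proof: both represent \(\lcdistribute(w,\cdot)\) as an integral over the sweep parameter \(\lambda\), bound the pointwise \(\ell_1\)/total-variation difference by \(|A_u\setminus A_v|/|A_u| + |A_v\setminus A_u|/|A_v|\), split into the ``same level'' and ``different level'' regimes, and use the rank-in-\(S_i^j\) lower bound on \(|A_u(\lambda)|\) to obtain the harmonic sum. The only cosmetic differences are that you phrase things via total variation while the paper does the same algebra directly, and the paper's Case~1/Case~2 correspond to your Regime~2/Regime~1 respectively.
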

\begin{proof}
   Recall in the construction of \(\lcdistribute\), we consider a continuously increasing \(\lambda \in [0, \reach_i]\).
   As \(\lambda\) increases at a rate of \(1\), we increase the value of at least one \(\lcdistribute(w, u)\) at a rate of \(1 / (\ell\reach_i)\) where \(\ell\) is the number of vertices in a particular set \(\bar{B}_i^j(u, \lambda)\).
   Only vertices \(w \in \bar{B}_i^j(u, \lambda) \subseteq B_i^j(u)\) see \(\lcdistribute(w, u)\) increase for this value of \(\lambda\).
   Let \(\close(u, \lambda)\) denote the set of vertices for which \(\lcdistribute(w, u)\) is increasing for parameter \(\lambda\), and define \(\close(v, \lambda)\) similarly.
   For a particular \(w \in V_i\), \(|\lcdistribute(w, u) - \lcdistribute(w, v)|\) is increasing at a rate of no more than
   \(|[w \in \close(u, \lambda)] / (|\close(u, \lambda)| \reach_i) - [w \in \close(v, \lambda)] / (|\close(v, \lambda)| \reach_i)|\).
   Suppose \(1 \leq |\close(u, \lambda)| \leq |\close(v, \lambda)|\).
   Then, the total rate of increase in difference among all \(w \in V_i\) for a particular \(\lambda\) is at most
   \begin{align*}
    \sum_{w \in V_i} \Abs{\frac{[w \in \close(u, \lambda)]}{\reach_i|\close(u, \lambda)|} - \frac{[w \in \close(v, \lambda)]}{\reach_i|\close(v, \lambda)|}}
    &= \frac{1}{\reach_i} \Paren{
        \begin{aligned}
            &\frac{|\close(u, \lambda) \setminus \close(v, \lambda)|}{|\close(u, \lambda)|} + \frac{|\close(v, \lambda) \setminus \close(u, \lambda)|}{|\close(v, \lambda)|}\\
            &\quad+ |\close(u, \lambda) \cap \close(v, \lambda)|\Paren{\frac{1}{|\close(u, \lambda)|} - \frac{1}{|\close(v, \lambda)|}}
        \end{aligned}}\\
    &= \frac{1}{\reach_i} \Paren{
        \begin{aligned}
            &\frac{|\close(u, \lambda) \setminus \close(v, \lambda)|}{|\close(u, \lambda)|} + \frac{|\close(v, \lambda) \setminus \close(u, \lambda)|}{|\close(v, \lambda)|}\\
            &\quad + \frac{|\close(u, \lambda) \cap \close(v, \lambda)|(|\close(v, \lambda)| - |\close(u, \lambda)|)}{|\close(u, \lambda)||\close(v, \lambda)|}
        \end{aligned}}\\
    &\leq \frac{1}{\reach_i} \Paren{
        \begin{aligned}
            &\frac{|\close(u, \lambda) \setminus \close(v, \lambda)|}{|\close(u, \lambda)|} + \frac{|\close(v, \lambda) \setminus \close(u, \lambda)|}{|\close(v, \lambda)|}\\
            &\quad + \frac{|\close(u, \lambda)||\close(v, \lambda) \setminus \close(u, \lambda)|}{|\close(u, \lambda)||\close(v, \lambda)|}
        \end{aligned}}\\
    &= \frac{1}{\reach_i} \Paren{\frac{|\close(u, \lambda) \setminus \close(v, \lambda)|}{|\close(u, \lambda)|} + \frac{2|\close(v, \lambda) \setminus \close(u, \lambda)|}{|\close(v, \lambda)|}}\\
    &\leq \frac{2}{\reach_i} \Paren{\frac{|\close(u, \lambda) \setminus \close(v, \lambda)|}{|\close(u, \lambda)|} + \frac{|\close(v, \lambda) \setminus \close(u, \lambda)|}{|\close(v, \lambda)|}}\\
    &= \frac{2}{\reach_i}\sum_{w \in V_i} \Paren{
        \begin{aligned}
            &\frac{[w \in (\close(u, \lambda) \setminus \close(v, \lambda))]}{|\close(u, \lambda)|} \\
            &\quad+ \frac{[w \in (\close(v, \lambda) \setminus \close(u, \lambda))]}{|\close(v, \lambda)|}
        \end{aligned}},
   \end{align*}
   and this same bound holds when \(|\close(v,\lambda)| < |\close(u, \lambda)|\).

   Considering all \(\lambda\), we see
   \[\sum_{w \in V_i} |\lcdistribute(w, u) - \lcdistribute(w, v)| \leq \frac{2}{\reach_i}\int_0^{\reach_i}\sum_{w \in V_i} \Paren{
    \begin{aligned}
        &\frac{[w \in (\close(u, \lambda) \setminus \close(v, \lambda))]}{|\close(u, \lambda)|} \\
        &\quad+ \frac{[w \in (\close(v, \lambda) \setminus \close(u, \lambda))]}{|\close(v, \lambda)|}
    \end{aligned}} d\lambda.\]

    Therefore, our goal is to, for each \(w \in V_i\), bound the measure of \(\lambda\) that puts \(w\) in exactly one of \(\close(u, \lambda)\) or \(\close(v, \lambda)\) and then divide by a number at most the size of that same set for the length of those \(\lambda\).

    Fix \(w\), and consider the set of \(\lambda\) such that \(w \in (\close(u, \lambda) \setminus \close(v, \lambda))\).
    We have two (not mutually exclusive) cases to consider.
    \begin{enumerate}
        \item
        \(\dist_i(u, w) \leq \lambda\) but \(\dist_i(v, w) > \lambda\):
        By the triangle inequality, \(\dist_i(v, w) - \dist_i(u, w) \leq \dist_i(u, v)\).
        Therefore, the range of \(\lambda\) for which this case can occur has measure at most \(\dist_i(u, v)\) as well.
        Recall, \(w \in \close(u, \lambda)\) implies \(w \in B_i^j(u)\) for some \(j\).
        Let \(\Seq{w_1, \dots, w_r}\) denote the vertices of \(B_i^j(u)\) sorted by increasing distance from \(u\), and let \(w = w_q\).
        We have \(\Set{w_1, \dots, w_q} \subseteq \close(u, \lambda)\), so \(|\close(u, \lambda)| \geq q\).

        \item
        \(\close(u, \lambda) \subseteq B_i^j(u)\) but \(\close(v, \lambda) \subseteq B_i^{j'}(v)\) for some \(j' \neq j\):
        This case occurs when (\(\dist_i(u, S_i^j) \leq \lambda\) and \(\dist_i(v, S_i^j) > \lambda\)) or (\(\dist_i(u, S_i^{j+1} > \lambda)\) and \(\dist_i(v, S_i^{j+1}) \leq \lambda\)).
        As before, the triangle inequality implies \(\dist_i(v, S_i^j) - \dist_i(u, S_i^j)\) and \(\dist_i(u, S_i^{j+1}) - \dist_i(v, S_i^{j+1})\) are both at most \(\dist_i(u, v)\).
        Therefore, the range of relevant \(\lambda\) for this case has measure at most \(2 \dist_i(u, v)\).
    \end{enumerate}

    There are \(k = \lg n\) different choices for \(j\) in either case.
    The total measure of \(\lambda\) resulting in case 2. is at most \(2 \dist_i(u, v) \lg n < \dist_i(u, v) H_n \lg n\).
    During those moments, \(\close(u, \lambda)\) and \(\close(v, \lambda)\) are completely disjoint, implying the integrand above is equal to \(2\).
    Summing across \(w \in V_i\), and noting each index \(q\) can be used at most once per choice of \(j\), we have
    \begin{align*}
        \sum_{w \in V_i} |\lcdistribute(w, u) - \lcdistribute(w, v)| 
        &\leq \frac{2}{\reach_i}\int_0^{\reach_i} \sum_{w \in V_i} \Paren{
        \begin{aligned}
            &\frac{[w \in (\close(u, \lambda) \setminus \close(v, \lambda))]}{|\close(u, \lambda)|} \\
            &\quad+ \frac{[w \in (\close(v, \lambda) \setminus \close(u, \lambda))]}{|\close(v, \lambda)|}
        \end{aligned}} d\lambda \\
        &< \frac{2}{\reach_i} \Paren{2 \dist_i(u, v) H_n \lg n + 2 \lg n \sum_{q = 1}^n \frac{\dist_i(u, v)}{q}} \\
        &= \frac{8\dist_i(u, v) H_n \lg n}{\reach_i}.
    \end{align*}
\end{proof}

We set \(\approxratio := 180 H_n \lg^2 n = O(\log^3 n)\).

\begin{lemma}
    \label{lem:ratio}
    We have \(||\lcapprox \demand||_1 \leq \approxratio \cdot \opt(\demand)\). 
\end{lemma}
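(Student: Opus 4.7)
The plan is to reduce the bound to a per-edge statement and then sum contributions layer by layer. I would first take an optimal flow $f^*$ for $\demand$ and decompose it into flow paths, yielding $\demand = \sum_p \phi_p(e_{u_p} - e_{v_p})$ with $\sum_p \phi_p \dist(u_p, v_p) = \opt(\demand)$. The triangle inequality for $\|\cdot\|_1$ and the linearity of $\lcapprox$ give $\|\lcapprox \demand\|_1 \leq \sum_p \phi_p \|\lcapprox(e_{u_p} - e_{v_p})\|_1$. Replacing each source-sink pair by a telescoping sum along a shortest $u_p$-$v_p$ path in $G$ then reduces the goal to proving $\|\lcapprox(e_x - e_y)\|_1 \leq \approxratio \cdot \cost(xy)$ for every edge $xy \in E$.

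Fix such an edge $xy$ with $c := \cost(xy)$, and let $\parent_i(x), \parent_i(y)$ denote the ancestors of $x, y$ in $V_i$ when they exist. I would expand the $(w_1, w_2)$ entry of $\lcapprox(e_x - e_y)$ for $w_1 \in V_i$ using the factorization $\lcroute((w_1, w_2), v) = \lcdistribute(w_2, \parent(v)) \cdot \lcdistribute(w_1, v)$. Applying $|AB - CD| \leq A|B - D| + D|A - C|$ and then summing over $(w_1, w_2)$ splits the layer-$i$ contribution into a \emph{same-layer} distortion sum $\sum_{w_1} |\lcdistribute(w_1, \parent_i(x)) - \lcdistribute(w_1, \parent_i(y))|$ and a \emph{parent-layer} distortion sum $\sum_{w_2} |\lcdistribute(w_2, \parent(\parent_i(x))) - \lcdistribute(w_2, \parent(\parent_i(y)))|$. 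Each is bounded by Lemma~\ref{lem:distortion} when both arguments sit in a common layer, and by the trivial bound $\sum_w \lcdistribute(w, v) = 1$ otherwise.

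The layers then split into three natural regimes. In \emph{contracted} layers with $c \leq \reach_i / n$ the edge $xy$ has been contracted, so $\parent_i(x) = \parent_i(y)$ and the aggregated demand $\lcaggregate(e_x - e_y)$ vanishes on $V_i$; this layer contributes zero. In \emph{active} layers with $\reach_i / n < c \leq 2\reach_i$---of which there are only $O(\log n)$ since $\reach_i$ halves at each step---the direct edge is present in $G_i$, so $\distin{i}(\parent_i(x), \parent_i(y)) \leq c$ and likewise at the parent layer; combined with the cost scaling $\lccost((w_1, w_2), (w_1, w_2)) = O(\reach_i)$, Lemma~\ref{lem:distortion} then yields an $O(\log^2 n) \cdot c$ per-layer contribution and $O(\log^3 n) \cdot c$ total. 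In \emph{deleted} layers with $c > 2\reach_i$ the distortion bound is not useful, so I would fall back on $\sum_w |\lcdistribute| \leq 2$; each deleted layer contributes $O(\reach_i)$, and the geometric decay of $\reach_i$ collapses this sum to $O(c)$.

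The main bookkeeping obstacle is the asymmetric case where $\parent(\parent_i(x))$ and $\parent(\parent_i(y))$ lie in distinct layers, or where only one of $\parent_i(x), \parent_i(y)$ exists because its counterpart has been removed as isolated in $G_i$. In those cases Lemma~\ref{lem:distortion} cannot be applied directly to the parent-layer term and must be replaced by the trivial $O(\reach_i)$ fallback. The saving observation is that such asymmetries can only arise in the deleted regime---in active and contracted layers the edge $xy$ forces both endpoints into $G_i$ with parents in a common prior layer---so the geometric series over deleted layers still sums to $O(c)$, matching $\approxratio = O(\log^3 n)$.
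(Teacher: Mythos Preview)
Your proposal is correct and uses the same core machinery as the paper---flow decomposition, the factorization $|AB-CD| \leq A|B-D| + D|A-C|$, Lemma~\ref{lem:distortion} in the intermediate regime, and a geometric series for small reaches---so the two arguments are essentially the same. The one organizational difference is that you telescope each flow path down to individual edges and prove the per-edge bound $\|\lcapprox(e_x-e_y)\|_1 \leq \approxratio \cdot \cost(xy)$, whereas the paper works directly with each path's endpoints $(u,v)$ and splits cases on $\reach_{i'}$ versus $\dist(u,v)$; your contracted/active/deleted regimes correspond precisely to the paper's three cases. The per-edge reduction is arguably tidier, since the edge $xy$ itself witnesses both membership in $G_i$ and the distance bound needed for Lemma~\ref{lem:distortion}, while the paper avoids the extra telescoping step but must argue more carefully about when $u$ and $v$ first separate under contraction. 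Your handling of the asymmetric-parent obstacle is also right: in contracted and active layers the edge $xy$ is never deleted at any coarser scale, so $x$ and $y$ lie in the same contracted component there and their ancestors (and those ancestors' parents) always land in a common layer; asymmetries are thus confined to the deleted regime where the $O(\reach_i)$ fallback already sums to $O(c)$.
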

\begin{proof}
    Let \(f^*\) be an optimal flow with \(\cost(f^*) = \opt(\demand)\).
    Standard flow theory implies there exists a decomposition of \(f^*\) into a linear combination \(a_1 f_1 + a_2 f_2 + \dots\) of unit path flows such that each \(a_j > 0\) and \(\cost(f^*) = \sum_j a_j \cost(f_j)\).
    We will charge each of the non-zero terms in \(\lcapprox \demand\) to one or more of these path flows and argue that each flow \(f_j\) is charged at most \(\approxratio a_j \cost(f_j)\).

    Fix \(f_j\).
    Let \(u \in V\) and \(v \in V\) be the source and sink of \(f_j\), respectively.
    We have \(\cost(f_j) \geq \dist(u, v)\) (in fact, equal).
    We consider charges based on \(a_j\) units in \(\demand(u)\) and \(-a_j\) units in \(\demand(v)\).
    These charges are handled in a few cases that are not necessarily exclusive.

    Consider any \((i', u')\) where \(\parent(u') \in V_{i'}\) and \(u \in \contractset(u')\),
    and \(\reach_{i'} \leq \dist(u, v)\).
    By construction of \(\lcroute\), \(\sum_{w \in (\cup_{i > i'} V_i)} \sum_{w' \in V_{i'}} \lcroute((w, w'), u') = 1\).
    Summing over all such pairs \((i', u')\), we see
    \begin{align*}
        \sum_{(i', u') \mid \parent(u') \in V_{i'} \wedge u \in \contractset(u') \wedge \reach_{i'} \leq \dist(u,v)} ~ \sum_{w \in (\cup_{i > i'} V_i)} \sum_{w' \in V_{i'}} 3\reach_{i'} \lcroute((w, w'), u')
        &= \sum_{i' \mid \reach_{i'} \leq \dist(u,v)} 3 \reach_{i'} \\
        &\leq 6 \dist(u,v) \\
        &< 2 \dist(u,v) H_n \lg^2 n.
    \end{align*}

    In words, the \(a_j\) units of demand from \(u\) contribute at most \(2 a_j \dist(u,v) H_n \lg^2 n\) total among various \(|(\lcapprox \demand)((w, w'))|\) where \(w' \in V_{i'}\) with \(\reach_{i'} \leq \dist(u, v)\).
    We charge \(2 a_j \dist(u,v) H_n \lg^2 n\) to \(f_j\) for these contributions.


    Now, consider any \((i', u')\) where \(\parent(u') \in V_{i'}\), \(u \in \contractset(u')\), and \(\reach_{i'} > \dist(u, v)\).
    Having \(\reach_{i'} > \dist(u, v)\) implies either there is a path from \(\parent(u')\) to a distinct contracted set of vertices in \(G_i\) containing \(v\) or \(v \in \contractset(\parent(u'))\) as well.
    Let \(v'\) such that \(\parent(v') \in V_{i'}\) with \(v \in \contractset(v')\).

    Suppose either \(u'\) or \(v'\) is in some \(V_i\) with \(\reach_i < \reach_{i'} / 2\).
    Lemma~\ref{lem:parent_structure}'s guarantee of long incident edges for \(\parent(u')\) and \(\parent(v')\) implies \(\parent(v') = \parent(u')\), and no other vertex of \(V_{i'}\) lies within \(\reach_{i'}\) of them.
    Therefore,
    \(\lcdistribute(\parent(v'), v) = \lcdistribute(\parent(v'), u') = 1\), and
    \begin{align*}
    \sum_{w \in V_i} \sum_{w' \in V_{i'}} |&\lcroute((w, w'), u') - \lcroute((w, w'), v')| \\
     &= \sum_{w \in V_i} \sum_{w' \in V_{i'}} |\lcdistribute(w', \parent(u')) \lcdistribute(w, u') - \lcdistribute(w', \parent(v'))\lcdistribute(w, v')|  \\
    &= \sum_{w \in V_i} |\lcdistribute(w, u') - \lcdistribute(w, v')| \\
    &\leq \frac{8\delta(u,v)H_n \lg n}{\reach_{i}}\\
    &\leq \frac{4\delta(u,v)H_n \lg^2 n}{\reach_{i}}.
    \end{align*}
    
    Therefore, all but \(4a_j\delta(u,v)H_n \lg^2 n / \reach_{i}\) units of the \(a_j\) demand from \(u\) contributing to various \(|(\lcapprox \demand)(w, w')|\) of this sort are canceled by opposite demand from \(v\).
    Each unit is multiplied by \(2 \reach_{i}\), so we charge \(8a_j\delta(u,v)H_n\lg n\) for these contributions.

    This subcase can only occur once, because for smaller choices of \(i'\), vertices \(u'\), \(v'\), and their parents have identical aggregate demand distributions, implying \emph{all} \(a_j\) units of demand from \(u\) are cancelled by opposite demand from \(v\).

    If the above subcase does not occur, then both \(\parent(u')\) and \(\parent(v')\) belong to the specific common \(V_i\) with \(\reach_i = \reach_{i'} / 2\).
    By Lemma~\ref{lem:distortion} and the fact that \(|ab - cd| \leq |a - c| + |b - d|\) for \(a,b,c,d \in [0, 1]\),
    \begin{align*}
    \sum_{w \in V_i} \sum_{w' \in V_{i'}} |&\lcroute((w, w'), u') - \lcroute((w, w'), v')| \\
    &= \sum_{w \in V_i} \sum_{w' \in V_{i'}} |\lcdistribute(w', \parent(u')) \lcdistribute(w, u') - \lcdistribute(w', \parent(v'))\lcdistribute((w, v')| \\
    &\leq \sum_{w \in V_i} \sum_{w' \in V_{i'}}  (|\lcdistribute(w',\parent(u'))- \lcdistribute(w',\parent(v'))| + |\lcdistribute(w, u') - \lcdistribute(w, v')|) \\
    &\leq \frac{8\delta(u,v)H_n \lg n}{\reach_{i'}} + \frac{8\delta(u,v)H_n \lg n}{\reach_{i'} / 2}\\
    &\leq \frac{24\delta(u,v)H_n \lg n}{\reach_{i'}}.
    \end{align*}
    
    Therefore, all but \(24a_j\delta(u,v)H_n \lg n / \reach_{i'}\) units of the \(a_j\) demand from \(u\) contributing to various \(|(\lcapprox \demand)(w, w')|\) of this sort are canceled by opposite demand from \(v\).
    Each unit is multiplied by \(3 \reach_{i'}\), so we charge \(72a_j\delta(u,v)H_n\lg n\) for these contributions.

    The contractions used in building the layer graphs guarantees there is at most one choice of \(i'\) where \(\reach_{i'} \geq n \cdot \dist(u, v)\) while \(u' \neq v'\).
    For smaller choices of \(i'\), vertices \(u'\), \(v'\), and their parents have identical aggregate demand distributions, implying \emph{all} \(a_j\) units of demand from \(u\) are cancelled by opposite demand from \(v\).
    There are at most \(\lg n\) different values \(i'\) with large reach that result in any non-zero charge, bringing this set of charges to a total of \(72a_j\delta(u,v)H_n\lg^2 n\).

    Finally, we consider \(u' \in V_0\), \(v' \in V_0\), \(u \in \contractset(u')\), and \(v \in \contractset(v')\).
    We have
    \[\sum_{w \in V_0} |\lcdistribute(w, u') - \lcdistribute(w, v')| \leq \frac{8\delta(u,v)H_n\lg n}{\reach_0}.\]
    Similar to before, all but \(8a_j\delta(u,v)H_n \lg n / \reach_0\) units of the \(a_j\) demand from \(u\) are canceled by opposite demand from \(v\).
    Multiplying by \(2 \reach_0\), we charge \(16a_j\delta(u,v)H_n \lg n \leq 8a_j\delta(u,v)H_n \lg^2 n\).

    Summing over all types of charges and then doubling the value for the contributions of \(v\) to various \(|(\lcapprox \demand)(w, w')|\), we get a total charge of \(180a_j\delta(u,v)H_n\lg^2 n \leq \approxratio \cdot a_j\cost(f_j)\) to \(f_i\).
    All units of demand for all \(u,v \in V\) are considered throughout these charges, so we have charged at least \(||\lcapprox b||_1\) in total.
    On the other hand, we charge at most \(\sum_j \approxratio \cdot a_j \cost(f_j) = \approxratio \cdot \cost(f^*) = \approxratio \cdot \opt(\demand)\) in total.
\end{proof}

\section{Approximation scheme}
\label{sec:scheme}

We are now ready to present our main theorem.
\begin{theorem}
    There exists a deterministic algorithm that given an undirected graph \(G = (V, E)\) over \(n\) vertices and \(m\) edges, positive edge costs \(\cost \in \R_{> 0}^{E}\), a proper set of demands \(\demand \in \R^V\), and a parameter \(\eps > 0\) computes a flow \(f\) routing \(\demand\) with \(\cost(f) \leq (1 + \eps)\opt(d)\) in \(O(\eps^{-2}m \log^{O(1)} n)\) time.
\end{theorem}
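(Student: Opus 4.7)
The plan is to assemble the pieces constructed in Sections~\ref{sec:layers}--\ref{sec:approx_analysis} and feed them into the boosting framework of Zuzic~\cite{z-sbft-23}. First I would construct the sequence of layer graphs $\langle (G_i, \reach_i) \rangle$ via Lemma~\ref{lem:construct_layers} in $O(m \log n)$ time, then build a Thorup-Zwick distance oracle on each layer in $O(m \log^3 n)$ total time (Section~\ref{sec:oracle}). Using the bundles stored by these oracles, I would build $\lcdistribute$, $\lcroute$, and $\lccost$ explicitly and leave $\lcaggregate$ implicit, all in $O(m \log^5 n)$ construction time (Section~\ref{sec:linear_cost_approx}). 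By Lemmas~\ref{lem:based_flow_routes}, \ref{lem:based_flow_cost}, and~\ref{lem:ratio}, the product $\lcapprox := \lccost \lcroute \lcaggregate$ satisfies $\opt(\demand) \leq ||\lcapprox \demand||_1 \leq \approxratio \opt(\demand)$ with $\approxratio = O(\log^3 n)$, so $\lcapprox$ qualifies as an $\approxratio$-approximate linear cost approximator; Lemma~\ref{lem:lcapprox-apply} further bounds the cost $M$ of any matrix-vector multiplication with $\lcapprox$ or $\lcapprox^T$ by $O(m \log^5 n)$.

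Next I would invoke Zuzic's boosting framework~\cite[Corollaries 12 and 16]{z-sbft-23} with accuracy parameter $\eps/2$. As recalled in Section~\ref{sec:introduction-results}, this produces in $O(\eps^{-2} \approxratio^2 M \log^{O(1)} n) = O(\eps^{-2} m \log^{O(1)} n)$ time a flow $f'$ with $\cost(f') \leq (1 + \eps/2)\opt(\demand)$ and residual demand $\demand - \incidence_G f'$ whose optimal routing cost is at most $\opt(\demand)/n^2$.

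To discharge the residual, I would compute a minimum spanning tree $T$ of $G$ in $O(m \log n)$ time and route the residual demand along $T$ via a single post-order traversal in $O(n)$ time. Since any spanning tree of $G$ stretches pairwise distances by at most a factor of $n - 1$, the tree-routed flow costs at most $n \cdot \opt(\demand)/n^2 = \opt(\demand)/n \leq (\eps/2)\opt(\demand)$ for sufficiently large $n$ (small instances can be dispatched directly via Orlin's strongly polynomial algorithm~\cite{o-fspmc-93}). Summing this tree flow with $f'$ yields a flow routing $\demand$ exactly with total cost at most $(1 + \eps)\opt(\demand)$, and the total running time is dominated by the boosting step.

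The main obstacle is not really mathematical but rather confirming that the interface we provide for $\lcapprox$ matches exactly what Zuzic's corollaries consume: namely, that his notion of an $\approxratio$-approximate cost approximator is precisely the two-sided inequality established in Section~\ref{sec:approx_analysis}, and that his quoted running time charges only for our matrix-vector oracle time $M$ plus polylogarithmic overhead. Once this correspondence is verified, the theorem follows from the straightforward bookkeeping sketched above.
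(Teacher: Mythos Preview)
Your proposal is correct and follows essentially the same route as the paper: build the layers and oracles, assemble \(\lcapprox\), invoke Zuzic's boosting to get an almost-feasible \(f'\), and fix up the residual on a minimum spanning tree; the paper simply cites~\cite{s-gpumf-17} for that last step and writes the final cost as \((1+\eps/2+1/n)\opt(\demand)\). One small correction: the sentence ``any spanning tree of \(G\) stretches pairwise distances by at most a factor of \(n-1\)'' is false for weighted graphs---it is the cycle property of the \emph{minimum} spanning tree (every non-tree edge is the heaviest on its fundamental cycle) that gives the \((n-1)\)-stretch bound you need, so keep the argument tied specifically to the MST you compute.
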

\begin{proof}
    We begin by constructing the layer graphs as presented in Section~\ref{sec:layers} along with the Thorup-Zwick distance oracles for each in \(O(m \log^3 n)\) time total.
    We construct sparse representations of the matrices \(\lcroute\) and \(\lccost\) in \(O(m \log^5 n)\) time so they can be used in matrix-vector multiplications with the \(O(\log^3 n)\)-approximate linear cost approximator \(\lcapprox\) and its transpose.
    Each matrix-vector multiplication takes \(O(m \log^5 n)\) time.
    We perform \(O(\eps^{-2} \log^{O(1)} n)\) multiplications with \(\lcapprox\) according to the boosting framework of Zuzic~\cite[Corollaries 12 and 16]{z-sbft-23} to find an infeasible flow \(f'\) of cost \(\cost(f') \leq (1 + \eps/2)\opt(\demand)\) where \(\opt(\demand - \incidence_G f') \leq \opt(\demand) / n^2\).
    Finally, we add in an \(n\)-approximate solution that routes \(\demand - \incidence_G f'\) as in~\cite{s-gpumf-17} in \(O(m \log n)\) additional time.
    The total cost of the flow returned is \((1 + \eps / 2 + 1/n) \opt(\demand) \leq (1 + \eps)\opt(\demand)\).
\end{proof}

\bibliographystyle{alpha} 
\bibliography{refs}

\end{document}